\newcommand{\wt}{{\mathrm{wt}}}
\newcommand{\cS}{{\mathcal{S}}}
\newcommand{\tr}{{\mathrm{Tr}}}
\newcommand{\gf}{{\mathrm{GF}}}
\newcommand{\C}{{\mathcal{C}}}
\newcommand{\cP}{{\mathcal{P}}}
\newcommand{\bc}{{\mathbf{c}}}
\newcommand{\bg}{{\mathbf{g}}}
\newcommand{\bu}{{\mathbf{u}}}
\newcommand{\bt}{{\mathbf t}}
\newcommand{\bx}{{\mathbf x}}
\newcommand{\by}{{\mathbf y}}
\newcommand{\bh}{{\mathbf h}}
\newtheorem{theorem}{Theorem}
\newtheorem{lemma}[theorem]{Lemma}
\newtheorem{corollary}[theorem]{Corollary}
\newtheorem{example}{Example}
\begin{document}

\title{A Class of Two-Weight and Three-Weight Codes and Their Applications in Secret Sharing\thanks{The research of C. Ding was supported by the Hong Kong Research Grants Council, under Project No. 16301114.}}

\author{Kelan Ding\thanks{K. Ding is with the State Key Laboratory of Information Security, the Institute of Information Engineering, The Chinese Academy of Sciences, Beijing, China. Email: dingkelan@iie.ac.cn} and 
Cunsheng Ding\thanks{C. Ding is with the Department of Computer Science and Engineering, 
The Hong Kong University of Science and Technology, Clear Water Bay, Kowloon, Hong Kong. Email: cding@ust.hk} 
}

\date{\today}
\maketitle

\begin{abstract}
In this paper, a class of two-weight and three-weight linear codes over $\gf(p)$ is constructed, and their 
application in secret sharing is investigated. Some of the linear 
codes obtained are optimal in the sense that they meet certain bounds on linear codes. These codes have 
applications also in authentication codes, association schemes, and strongly regular graphs, 
in addition to their applications in consumer electronics, communication and data storage systems.      
\end{abstract}

\begin{keywords}
Association schemes, authentication codes, linear codes, secret sharing schemes, strongly regular graphs.   
\end{keywords}

\section{Introduction}\label{sec-intro} 

Throughout this paper, let $p$ be an odd prime and let $q=p^m$ for some positive integer $m$. 
An $[n,\, k,\,d]$ code $\C$ over $\gf(p)$ is a $k$-dimensional subspace of $\gf(p)^n$ with minimum 
(Hamming) distance $d$.  Let $A_i$ denote the number of codewords with Hamming weight $i$ in a code
$\C$ of length $n$. The {\em weight enumerator} of $\C$ is defined by
$
1+A_1z+A_2z^2+ \cdots + A_nz^n.
$ 
The {\em weight distribution} $(1,A_1,\ldots,A_n)$ is an important research topic in coding theory, 
as it contains crucial information as to estimate the error correcting capability and the probability of
error detection and correction with respect to some algorithms. 
A code $\C$ is said to be a $t$-weight code  if the number of nonzero
$A_i$ in the sequence $(A_1, A_2, \cdots, A_n)$ is equal to $t$.

Let $D=\{d_1, \,d_2, \,\ldots, \,d_n\} \subseteq \gf(q)$.
Let $\tr$ denote the trace function from $\gf(q)$ onto $\gf(p)$ throughout 
this paper. We define a linear code of 
length $n$ over $\gf(p)$ by 
\begin{eqnarray}\label{eqn-maincode} 
\C_{D}=\{(\tr(xd_1), \tr(xd_2), \ldots, \tr(xd_n)): x \in \gf(q)\},   
\end{eqnarray}  
and call $D$ the \emph{defining set} of this code $\C_{D}$.

This construction is generic in the sense that many classes of known codes 
could be produced by selecting the defining set $D \subseteq \gf(q)$. This 
construction technique was employed in \cite{DLN} and \cite{DN07} for 
obtaining linear codes with a few weights. 

The objective of this paper is to construct a class of linear codes over $\gf(p)$ with two and three 
nonzero weights using this generic construction method, and investigate their application in secret 
sharing. Some of the linear 
codes obtained in this paper are optimal in the sense that they meet some bounds on linear codes. 
The linear codes with a few weights presented in this paper have applications also in authentication codes \cite{CX05}, association schemes \cite{CG84}, and strongly  
regular graphs \cite{CG84}, in addition to their applications in consumer electronics, communication and data storage systems.

\section{The linear codes with two and three weights}  

We only describe the codes and introduce their parameters in this section. The proofs of their parameters will be 
given in Section \ref{sec-proof}.   

In this paper, the defining set $D$ of the code $\C_D$ of (\ref{eqn-maincode})  is given by 
\begin{eqnarray}\label{eqn-defsetD}
D=\{x \in \gf(q)^*: \tr(x^2)=0\}. 
\end{eqnarray}

\begin{theorem}\label{thm-twothree1}
Let $m>1$ be odd, and let $D$ be defined in (\ref{eqn-defsetD}). Then the set $\C_D$ of (\ref{eqn-maincode}) is a 
$[p^{m-1}-1, m]$ code over $\gf(p)$ 
with the weight distribution in Table \ref{tab-twothree1},  where $A_w=0$ for all other weights $w$ not listed in the table.  
\end{theorem}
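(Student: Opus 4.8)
The plan is to compute everything through additive character sums and the quadratic Gauss sum of $\gf(q)$. Let $\zeta_p=e^{2\pi i/p}$, let $\chi(x)=\zeta_p^{\tr(x)}$ be the canonical additive character of $\gf(q)$, let $\eta$ be the quadratic character of $\gf(q)$ (with $\eta(0)=0$), and let $G=\sum_{x\in\gf(q)}\eta(x)\chi(x)$. I will use Weil's identity $\sum_{x\in\gf(q)}\chi(ax^2)=\eta(a)G$ for $a\in\gf(q)^*$, which follows by writing the number of square roots of $t$ as $1+\eta(t)$. First I would settle the length by expanding the indicator of $\tr(x^2)=0$ as $\frac1p\sum_{y\in\gf(p)}\chi(yx^2)$, so that $|\{x:\tr(x^2)=0\}|=\frac1p\bigl(q+G\sum_{y\in\gf(p)^*}\eta(y)\bigr)$. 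The decisive point is that $(q-1)/(p-1)=1+p+\cdots+p^{m-1}$ is odd when $m$ is odd, whence the restriction of $\eta$ to $\gf(p)^*$ equals the quadratic character $\eta_p$ of $\gf(p)$; therefore $\sum_{y\in\gf(p)^*}\eta(y)=\sum_{y\in\gf(p)^*}\eta_p(y)=0$, the Gauss-sum term cancels, and the count is $q/p=p^{m-1}$. Removing $x=0$ gives $n=|D|=p^{m-1}-1$.

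Next I would compute weights. For $x\in\gf(q)$ the codeword $c_x=(\tr(xd))_{d\in D}$ has $\wt(c_x)=n-N(x)$ with $N(x)=|\{d\in D:\tr(xd)=0\}|$. Passing to $M(x)=N(x)+1=|\{d\in\gf(q):\tr(d^2)=0,\ \tr(xd)=0\}|$ (adding back $d=0$), I would write $M(x)$ as a double sum over $y,z\in\gf(p)$ of $\sum_{d\in\gf(q)}\chi(yd^2+zxd)$. Completing the square in $d$ and applying Weil's identity factors out $G$ and leaves the inner sum $\sum_{z\in\gf(p)}\chi(-z^2x^2/(4y))=\sum_{z\in\gf(p)}\zeta_p^{-(z^2/4y)\,\tr(x^2)}$, which depends on $x$ only through $s=\tr(x^2)\in\gf(p)$. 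Hence $M(x)$, and so $\wt(c_x)$, depends only on whether $s=0$, $s$ is a nonzero square, or $s$ is a nonzero nonsquare, giving at most three nonzero weights as in Table \ref{tab-twothree1}.

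To get explicit values I would split on $s$. When $s=0$ the inner sum is $p$ and the residual sum $\sum_{y\in\gf(p)^*}\eta_p(y)$ again vanishes, so $M(x)=p^{m-2}$ and the weight is $p^{m-2}(p-1)$. When $s\neq0$ the inner sum contributes the $\gf(p)$-Gauss sum $g=\sum_{z\in\gf(p)}\eta_p(z)\zeta_p^z$, and after using $\eta_p(-s/(4y))\eta_p(y)=\eta_p(-s/4)$ (valid since $\eta_p(y)^2=1$) the $y$-sum collapses, leaving a correction term proportional to $Gg\,\eta_p(-s)$. I would evaluate $Gg$ via the Davenport--Hasse relation $G=(-1)^{m-1}g^{m}$, so $Gg=(-1)^{m-1}g^{m+1}$; since $m$ is odd, $(-1)^{m-1}=1$ and $m+1$ is even, hence $Gg=(g^2)^{(m+1)/2}=\pm p^{(m+1)/2}\in\Z$ with sign read off from $p\bmod 4$. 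This produces the two remaining weights $p^{m-2}(p-1)\pm(p-1)p^{(m-3)/2}$ (up to the fixed sign), matching the table.

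Finally, for the multiplicities I would count $n_c=|\{x\in\gf(q):\tr(x^2)=c\}|$ by the same method, obtaining $n_0=p^{m-1}$ and $n_c=\frac1p\bigl(p^m+Gg\,\eta_p(-c)\bigr)$ for $c\neq0$; summing over $c$ in each quadratic class of $\gf(p)^*$ then yields the frequencies $A_w$. Because $m>1$ forces $p^{m-2}>p^{(m-3)/2}$, all three weights are strictly positive, so every nonzero $x$ gives a nonzero codeword; thus the kernel of $x\mapsto c_x$ is trivial and $\dim\C_D=m$. The main obstacle is the exact evaluation of the product $Gg$ with the correct sign through Davenport--Hasse, together with the bookkeeping that guarantees the $A_w$ are nonnegative integers summing to $p^m$. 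The hypothesis that $m$ is odd is used twice and is essential: it makes $\eta|_{\gf(p)^*}=\eta_p$ (needed for the length and frequency counts) and makes $Gg$ a rational integer (needed for the explicit weights).
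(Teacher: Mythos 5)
Your proposal is correct and follows essentially the same route as the paper: expand the indicator functions of $\tr(x^2)=0$ and $\tr(bx)=0$ via additive characters, evaluate the resulting sums with the quadratic Gauss sum after completing the square, use the parity of $(q-1)/(p-1)$ to identify $\eta|_{\gf(p)^*}$ with the quadratic character of $\gf(p)$ when $m$ is odd, and obtain the multiplicities from the counts $n_c=|\{x:\tr(x^2)=c\}|$ grouped by quadratic class. The only cosmetic difference is that you evaluate $Gg$ via Davenport--Hasse, whereas the paper quotes the explicit value of $G(\eta,\chi_1)$ from Lidl--Niederreiter (which is the same identity in closed form); note only that the sign of $Gg=((-1)^{(p-1)/2}p)^{(m+1)/2}$ depends on $m\bmod 4$ as well as $p\bmod 4$.
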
 

\begin{table}[ht]
\begin{center} 
\caption{The weight distribution of the codes of Theorem \ref{thm-twothree1}}\label{tab-twothree1}
\begin{tabular}{|c|c|} \hline
Weight $w$ &  Multiplicity $A_w$  \\ \hline  
$0$          &  $1$ \\ \hline 
$(p-1)\left(p^{m-2}  -  p^{\frac{m-3}{2}}\right)$  & 
$\frac{p-1}{2}\left(p^{m-1}  +  p^{\frac{m-1}{2}}\right)$\\ \hline 
$(p-1)p^{m-2}$  & $p^{m-1}-1$ \\ \hline 
$(p-1)\left(p^{m-2} + p^{\frac{m-3}{2}}\right)$  & 
$\frac{p-1}{2}\left(p^{m-1}  -  p^{\frac{m-1}{2}}\right)$\\ \hline 
\end{tabular}
\end{center} 
\end{table} 

%\begin{example} 
%Let $(p,m)=(3,3)$. Then the code $\C_D$ has parameters $[8, 3, 4]$ and weight enumerator 
%$1+12z^4+8z^6+6z^8$. This code is almost optimal, as the optimal ternary code has parameters 
%$[8,3,5]$. 
%\end{example}

\begin{example} 
Let $(p,m)=(3,5)$. Then the code $\C_D$ has parameters $[80, 5, 48]$ and weight enumerator 
$1+90z^{48}+80z^{54}+72z^{60}$. 
\end{example}

\begin{theorem}\label{thm-twothree2}
Let $m \geq 2$ be even, and let $D$ be defined in (\ref{eqn-defsetD}).  Then the code $\C_D$ over $\gf(p)$ 
of (\ref{eqn-maincode}) has parameters 
$$\left[p^{m-1}- (-1)^{(\frac{p-1}{2})^2 \frac{m}{2}} (p-1)p^{\frac{m-2}{2}} -1, m\right]$$ 
and  the weight distribution in Table \ref{tab-twothree2},  where $A_w=0$ for all other weights $w$ not listed in the table.  
\end{theorem}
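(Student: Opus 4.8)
The plan is to compute the Hamming weight of each codeword $\bc_x=(\tr(xd))_{d\in D}$ by counting zero coordinates with additive characters. Writing $\chi$ for the canonical additive character of $\gf(q)$ and $\chi_1$ for that of $\gf(p)$, so that $\chi(z)=\chi_1(\tr(z))$, the weight of $\bc_x$ equals $n-N_0(x)$, where $N_0(x)=|\{d\in D:\tr(xd)=0\}|$. Since $D\cup\{0\}=\{d\in\gf(q):\tr(d^2)=0\}$, I would first evaluate the length $n=|D|$ by orthogonality:
$$
|D|+1=\frac{1}{p}\sum_{y\in\gf(p)}\sum_{d\in\gf(q)}\chi(yd^2).
$$
The inner sum is a quadratic Gauss sum, equal to $\eta(y)G$ for $y\neq0$, where $\eta$ is the quadratic character of $\gf(q)$ and $G=\sum_{x\in\gf(q)^*}\eta(x)\chi(x)$. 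The decisive arithmetic input is the Davenport--Hasse relation $G=(-1)^{m-1}g^m$ together with $g^2=(-1)^{(p-1)/2}p$, where $g$ is the quadratic Gauss sum over $\gf(p)$; for even $m$ this yields $G=-(-1)^{(\frac{p-1}{2})^2\frac m2}p^{m/2}$, and combined with $\eta|_{\gf(p)^*}=\eta_0^{\,m}\equiv 1$ (even $m$, $\eta_0$ the quadratic character of $\gf(p)$) it gives $\sum_{y\neq0}\eta(y)=p-1$ and hence exactly the stated length.

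For the weights I would express
$$
N_0(x)+1=\frac{1}{p^2}\sum_{y,z\in\gf(p)}\sum_{d\in\gf(q)}\chi\big(yd^2+zxd\big),
$$
isolate the $y=0$ term (which contributes $p^{m-2}$), and for $y\neq0$ complete the square, $yd^2+zxd=y\big(d+\tfrac{zx}{2y}\big)^2-\tfrac{z^2x^2}{4y}$, to reduce the $d$-sum to $\eta(y)G\,\chi\!\big(-\tfrac{z^2x^2}{4y}\big)$. Because $z,y\in\gf(p)$, the last character factors through the trace as $\chi_1\!\big(-\tfrac{z^2}{4y}\tr(x^2)\big)$, so the remaining $z$-sum is $p$ when $\tr(x^2)=0$ and a quadratic Gauss sum $\eta_0\!\big(-\tfrac{\tr(x^2)}{4y}\big)g$ when $\tr(x^2)\neq0$. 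This splits the analysis into the two cases $\tr(x^2)=0$ and $\tr(x^2)\neq0$, collapsing everything in each case to a single sum over $y\in\gf(p)^*$.

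The structural heart of the even-$m$ case, and the point I expect to require the most care, is that $\eta(y)=\eta_0(y)^m=1$ for $y\in\gf(p)^*$, which forces the $y$-sum in the case $\tr(x^2)\neq0$ to be $\eta_0(-1)\eta_0(\tr(x^2))\,gG\sum_{y}\eta_0(y)=0$; hence every $x$ with $\tr(x^2)\neq0$ yields the single weight $(p-1)\big(p^{m-2}-(-1)^{(\frac{p-1}{2})^2\frac m2}p^{\frac{m-2}{2}}\big)$, while every nonzero $x$ with $\tr(x^2)=0$ yields $(p-1)p^{m-2}$. This is precisely where the even case departs from Theorem \ref{thm-twothree1}: for odd $m$ one has $\eta_0^{\,m}=\eta_0$, so the $y$-sum does not vanish and the factor $\eta_0(\tr(x^2))$ survives to create a third weight. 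Once the two weights are established, their positivity shows $\bc_x\neq\bzero$ for $x\neq\bzero$, giving dimension $m$ and making $x\mapsto\bc_x$ a bijection; the multiplicities are then just the sizes $|D|$ and $p^m-1-|D|$ of the two classes of $x$, which I would verify sum to $p^m-1$ to confirm the entries of Table \ref{tab-twothree2}.
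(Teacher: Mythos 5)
Your proposal is correct and follows essentially the same route as the paper: orthogonality of characters to reduce the length and the zero-coordinate count $N_0(x)$ to quadratic character sums, completion of the square and the Gauss-sum evaluation (Lemmas \ref{lem-32A1} and \ref{lem-32A2}), and the observation that $\eta$ is trivial on $\gf(p)^*$ for even $m$ (Lemma \ref{lem-bothcharac}), which is exactly how the paper's Lemmas \ref{lem-32B1}--\ref{lem-32B4} produce the two weights and their multiplicities. The only difference is cosmetic bookkeeping (you keep $z=0$ inside the $y\neq 0$ sum rather than splitting into four pieces as in Lemma \ref{lem-32B4}), and you share with the paper the harmless caveat that the positivity-of-weights argument for the dimension implicitly needs $m>2$ when $p\equiv 1\pmod 4$.
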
 

\begin{table}[ht]
\begin{center} 
\caption{The weight distribution of the codes of Theorem \ref{thm-twothree2}}\label{tab-twothree2}
\begin{tabular}{|c|c|} \hline
Weight $w$ &  Multiplicity $A_w$  \\ \hline  
$0$          &  $1$ \\ \hline 
$(p-1)p^{m-2}$  & $p^{m-1}- (-1)^{(\frac{p-1}{2})^2 \frac{m}{2}} (p-1)p^{\frac{m-2}{2}} -1$ \\ \hline 
$(p-1)\left(p^{m-2}-(-1)^{(\frac{p-1}{2})^2 \frac{m}{2}}p^{\frac{m-2}{2}}\right)$  & 
$(p-1)\left(p^{m-1}+(-1)^{(\frac{p-1}{2})^2 \frac{m}{2}}p^{\frac{m-2}{2}}\right)$\\ \hline 
\end{tabular}
\end{center} 
\end{table} 

%\begin{example} 
%Let $(p,m)=(3,4)$. Then the code $\C_D$ has parameters $[20, 4, 12]$ and weight enumerator 
%$1+60z^{12}+20z^{18}$. This code is optimal. 
%\end{example}

\begin{example} 
Let $(p,m)=(5,4)$. Then the code $\C_D$ has parameters $[104, 4, 80]$ and weight enumerator 
$1+520z^{80}+104z^{100}$. The best linear code of length $104$ and dimension $4$ over $\gf(5)$ 
has minimum weight $81$. 
\end{example}

It is observed that the weights in the code $\C_D$ have a common divisor $p-1$. This indicates that the 
code $\C_D$ may be punctured into a shorter one whose weight distribution can be easily derived from that of 
the original code $\C_D$. This is indeed true and can be done as follows.  

Note that $\tr(ax^2)=0$ for all $a \in \gf(p)$ if $\tr(x^2)=0$. Hence, the set $D$ of (\ref{eqn-defsetD}) 
can be expressed as 
\begin{eqnarray}\label{eqn-dfsetbarD}
D=(\gf(p)^*) \bar{D}=\{ab: a \in \gf(p)^* \mbox{ and } b \in \bar{D}\}, 
\end{eqnarray}
where $d_i/d_j \not\in \gf(p)^*$ for every pair of distinct elements $d_i$ and $d_j$ in $\bar{D}$. Then the 
code $\C_{\bar{D}}$ is a punctured version of $\C_D$ whose parameters are given in the following two 
corollaries.  

\begin{corollary}\label{thm-twothree3}
Let $m>1$ be odd, and let $\bar{D}$ be defined in (\ref{eqn-dfsetbarD}). Then the set $\C_{\bar{D}}$ of (\ref{eqn-maincode}) is 
a $[(p^{m-1}-1)/(p-1), m]$ code over $\gf(p)$ 
with the weight distribution in Table \ref{tab-twothree3}, where $A_w=0$ for all other weights $w$ not listed in the table.   
\end{corollary}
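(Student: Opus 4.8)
The plan is to derive everything from the multiplicative structure of $D$ already recorded in (\ref{eqn-dfsetbarD}), so that the entire statement follows from Theorem~\ref{thm-twothree1} by a coordinate-wise comparison of $\C_D$ and $\C_{\bar D}$. The first observation I would make precise is that $\gf(p)^*$ acts on $D$ by multiplication: for $x \in D$ and $a \in \gf(p)^*$ one has $\tr((ax)^2) = a^2\tr(x^2) = 0$ since $a^2 \in \gf(p)$, so $ax \in D$. Because $\gf(p)^*$ acts freely on $\gf(q)^*$ (from $ax = x$ with $x \neq 0$ one gets $a = 1$), every orbit has size exactly $p-1$, and $\bar D$ is by construction a full set of orbit representatives. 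Hence $|\bar D| = |D|/(p-1) = (p^{m-1}-1)/(p-1)$, which is the claimed length.

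Next I would pin down the coordinate-wise relation between a codeword of $\C_D$ and the corresponding codeword of $\C_{\bar D}$. Fix $x \in \gf(q)$ and group the coordinates of $\bc(x) = (\tr(xd))_{d \in D}$ according to the orbit to which $d$ belongs: the block indexed by $b \in \bar D$ consists of the entries $\tr(x\,(ab)) = a\,\tr(xb)$ for $a \in \gf(p)^*$, the factor $a$ pulling out because $a \in \gf(p)$. If $\tr(xb) = 0$ the whole block vanishes; if $\tr(xb) \neq 0$ then $\{a\,\tr(xb): a \in \gf(p)^*\}$ runs through all $p-1$ nonzero elements of $\gf(p)$, so the block contributes exactly $p-1$ nonzero coordinates. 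Consequently
\begin{equation*}
\wt\big((\tr(xd))_{d\in D}\big) = (p-1)\,\wt\big((\tr(xb))_{b\in \bar D}\big)
\end{equation*}
for every $x$. This is precisely the reason every weight of $\C_D$ is divisible by $p-1$.

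For the dimension I would note that the $\gf(p)$-linear map $x \mapsto (\tr(xb))_{b\in\bar D}$ has the same kernel as $x \mapsto (\tr(xd))_{d\in D}$: indeed $\tr(xb)=0$ for all $b\in\bar D$ is equivalent, via $\tr(x(ab)) = a\,\tr(xb)$, to $\tr(xd)=0$ for all $d\in D$. Since Theorem~\ref{thm-twothree1} asserts $\dim\C_D = m$, this common kernel is trivial, so $\dim\C_{\bar D} = m$ as well and the map $x\mapsto(\tr(xb))_{b\in\bar D}$ is a bijection from $\gf(q)$ onto $\C_{\bar D}$. Combining this bijection with the weight identity shows that the number of codewords of weight $w$ in $\C_{\bar D}$ equals the number of codewords of weight $(p-1)w$ in $\C_D$. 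Reading off Table~\ref{tab-twothree1} and dividing each listed weight by $p-1$ while keeping the multiplicities unchanged yields Table~\ref{tab-twothree3}.

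The only genuinely delicate point, rather than the routine algebra, is justifying that $\bar D$ is a clean transversal: one must confirm that the condition $d_i/d_j \notin \gf(p)^*$ imposed in (\ref{eqn-dfsetbarD}) guarantees both that distinct elements of $\bar D$ lie in distinct orbits and that $\gf(p)^*\bar D$ recovers all of $D$ without repetition. Once this is in place the block-counting and the kernel identity are immediate, and no new character-sum computation beyond Theorem~\ref{thm-twothree1} is required.
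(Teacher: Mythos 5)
Your proposal is correct and takes essentially the same route as the paper, which simply observes that $D=(\gf(p)^*)\bar{D}$ makes $\C_{\bar{D}}$ a punctured version of $\C_D$ and states that the corollary ``follows directly'' from Theorem~\ref{thm-twothree1}. Your write-up is a careful elaboration of exactly that puncturing argument --- the free $\gf(p)^*$-action giving the length, the block decomposition giving $\wt(\bc_D(x))=(p-1)\wt(\bc_{\bar D}(x))$, and the common kernel giving the dimension --- with no new ingredients beyond what the paper intends.
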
 

\begin{table}[ht]
\begin{center} 
\caption{The weight distribution of the codes of Corollary \ref{thm-twothree3}}\label{tab-twothree3}
\begin{tabular}{|c|c|} \hline
Weight $w$ &  Multiplicity $A_w$  \\ \hline  
$0$          &  $1$ \\ \hline 
$p^{m-2} -  p^{\frac{m-3}{2}}$  & 
$\frac{p-1}{2}\left(p^{m-1}  +   p^{\frac{m-1}{2}}\right)$\\ \hline 
$p^{m-2}$  & $p^{m-1}-1$ \\ \hline 
$p^{m-2} +  p^{\frac{m-3}{2}}$  & 
$\frac{p-1}{2}\left(p^{m-1} -  p^{\frac{m-1}{2}}\right)$\\ \hline 
\end{tabular}
\end{center} 
\end{table} 

%\begin{example} 
%Let $(p,m)=(3,3)$. Then the code $\C_{\bar{D}}$ has parameters $[4, 3, 2]$ and weight enumerator 
%$1+12z^2+8z^3+6z^4$. This is an MDS code and is optimal. 
%\end{example}

\begin{example} 
Let $(p,m)=(3,5)$. Then the code $\C_{\bar{D}}$ has parameters $[40, 5, 24]$ and weight enumerator 
$1+90z^{24}+80z^{27}+72z^{30}$. This code is optimal in the sense that any ternary code of length 
$40$ and dimension $5$ cannot have minimum distance $25$ or more \cite{Eupen}.  
\end{example}

\begin{corollary}\label{thm-twothree4}
Let $m \geq 2$ be even, and let $\bar{D}$ be defined in (\ref{eqn-dfsetbarD}).  Then the code $\C_{\bar{D}}$ 
over $\gf(p)$ of (\ref{eqn-maincode}) has parameters  
$$\left[\frac{p^{m-1}-1}{p-1}- (-1)^{(\frac{p-1}{2})^2 \frac{m}{2}} p^{\frac{m-2}{2}}, m\right]$$ 
and the weight distribution in Table \ref{tab-twothree4},  where $A_w=0$ for all other weights $w$ not listed in the table. 
\end{corollary}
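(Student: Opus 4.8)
The plan is to derive Corollary \ref{thm-twothree4} directly from Theorem \ref{thm-twothree2} by exploiting the multiplicative structure $D=(\gf(p)^*)\bar{D}$ recorded in (\ref{eqn-dfsetbarD}). Since $\bar{D}$ is a system of representatives for the $\gf(p)^*$-orbits inside $D$, and every nonzero $b$ has an orbit $\{ab: a \in \gf(p)^*\}$ of size exactly $p-1$ (as $ab=b$ forces $a=1$), these orbits partition $D$. Consequently $|\bar{D}|=|D|/(p-1)$, and since Theorem \ref{thm-twothree2} gives $|D|=p^{m-1}-(-1)^{(\frac{p-1}{2})^2\frac{m}{2}}(p-1)p^{\frac{m-2}{2}}-1$, dividing by $p-1$ yields exactly the length $\frac{p^{m-1}-1}{p-1}-(-1)^{(\frac{p-1}{2})^2\frac{m}{2}}p^{\frac{m-2}{2}}$ claimed for $\C_{\bar{D}}$.

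The heart of the argument is the weight relation between the two codes. Fix $x \in \gf(q)$ and $b \in \bar{D}$. Because the trace is $\gf(p)$-linear, for every $a \in \gf(p)^*$ one has $\tr(x\cdot ab)=a\,\tr(xb)$, so $\tr(x\cdot ab)=0$ if and only if $\tr(xb)=0$. Hence each representative $b$ for which $\tr(xb)\neq 0$ contributes $p-1$ nonzero entries to the codeword of $\C_D$ indexed by $x$, while a representative with $\tr(xb)=0$ contributes none. Summing over $\bar{D}$ gives $\wt_{\C_D}(x)=(p-1)\,\wt_{\C_{\bar{D}}}(x)$ for every $x$. Thus every nonzero weight in $\C_D$ is exactly $p-1$ times the corresponding weight in $\C_{\bar{D}}$, and the two codes realize each weight with identical multiplicity.

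Assembling the table is then immediate: dividing each nonzero weight in Table \ref{tab-twothree2} by $p-1$ produces the weights $p^{m-2}$ and $p^{m-2}-(-1)^{(\frac{p-1}{2})^2\frac{m}{2}}p^{\frac{m-2}{2}}$ of Table \ref{tab-twothree4}, while the multiplicities are inherited unchanged. It remains to confirm the dimension. The linear map $x \mapsto (\tr(xb))_{b \in \bar{D}}$ has the same kernel as $x \mapsto (\tr(xd))_{d \in D}$, since the weight relation above shows one codeword vanishes exactly when the other does; as Theorem \ref{thm-twothree2} asserts the latter map is injective (so $\C_D$ has dimension $m$), the former is injective as well and $\C_{\bar{D}}$ also has dimension $m$. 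The only point requiring genuine care is the disjointness and common size $p-1$ of the multiplicative orbits, which is guaranteed precisely by the defining property $d_i/d_j \notin \gf(p)^*$ of $\bar{D}$; with that in hand the corollary follows, its proof being the even-case analogue of that of Corollary \ref{thm-twothree3}, with Theorem \ref{thm-twothree2} replacing Theorem \ref{thm-twothree1}.
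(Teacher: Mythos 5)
Your proposal is correct and matches the paper's route exactly: the paper simply asserts that Corollary \ref{thm-twothree4} ``follows directly'' from Theorem \ref{thm-twothree2} via the puncturing decomposition $D=(\gf(p)^*)\bar{D}$ of (\ref{eqn-dfsetbarD}), and you have supplied precisely the orbit-counting, weight-scaling by $p-1$, and kernel-preservation details that justify that assertion.
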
 

\begin{table}[ht]
\begin{center} 
\caption{The weight distribution of the codes of Corollary \ref{thm-twothree4}}\label{tab-twothree4}
\begin{tabular}{|c|c|} \hline
Weight $w$ &  Multiplicity $A_w$  \\ \hline  
$0$          &  $1$ \\ \hline 
$p^{m-2}$  & $p^{m-1}- (-1)^{(\frac{p-1}{2})^2 \frac{m}{2}} (p-1)p^{\frac{m-2}{2}} -1$ \\ \hline 
$p^{m-2}-(-1)^{(\frac{p-1}{2})^2 \frac{m}{2}}p^{\frac{m-2}{2}}$  & 
$(p-1)\left(p^{m-1}+(-1)^{(\frac{p-1}{2})^2 \frac{m}{2}}p^{\frac{m-2}{2}}\right)$\\ \hline 
\end{tabular}
\end{center} 
\end{table} 

%\begin{example} 
%Let $(p,m)=(3,4)$. Then the code $\C_{\bar{D}}$ has parameters $[10, 4, 6]$ and weight enumerator 
%$1+60z^{6}+20z^{9}$. This code is optimal. 
%\end{example}

\begin{example} 
Let $(p,m)=(5,4)$. Then the code $\C_{\bar{D}}$ has parameters $[26, 4, 20]$ and weight enumerator 
$1+520z^{20}+104z^{25}$. This code is optimal due to the Griesmer bound. 
\end{example}

\section{The proofs of the main results}\label{sec-proof}

Our task of this section is to prove Theorems \ref{thm-twothree1} and \ref{thm-twothree2}, while 
Corollaries \ref{thm-twothree3} and \ref{thm-twothree4} follow directly from Theorems \ref{thm-twothree1} 
and \ref{thm-twothree2}, respectively. 

\subsection{Some auxiliary results}

To prove Theorems \ref{thm-twothree1} and \ref{thm-twothree2}, we need the help of a number of lemmas that are described 
and proved in the sequel. We start with group characters and Gauss sums. 

An {\em additive character} of $\gf(q)$ is a nonzero function $\chi$ 
from $\gf(q)$ to the set of nonzero complex numbers such that 
$\chi(x+y)=\chi(x) \chi(y)$ for any pair $(x, y) \in \gf(q)^2$. 
For each $b\in \gf(q)$, the function
\begin{eqnarray}\label{dfn-add}
\chi_b(c)=\epsilon_p^{\tr(bc)} \ \ \mbox{ for all }
c\in\gf(q) 
\end{eqnarray}
defines an additive character of $\gf(q)$, where $\epsilon_p=e^{2\pi \sqrt{-1}/p}$. When $b=0$,
$\chi_0(c)=1 \mbox{ for all } c\in\gf(q), 
$ 
and is called the {\em trivial additive character} of
$\gf(q)$. The character $\chi_1$ in (\ref{dfn-add}) is called the
{\em canonical additive character} of $\gf(q)$. 
It is known that every additive character of $\gf(q)$ can be 
written as $\chi_b(x)=\chi_1(bx)$ \cite[Theorem 5.7]{LN}. 

Since the multiplicative group $\gf(q)^*$ is cyclic, all the characters of the 
multiplicative group $\gf(q)^*$ are given by 
$$ 
\psi_j(\alpha^k)=e^{2\pi \sqrt{-1} jk/(q-1)}, \ \ k=0, 1, \cdots, q-2,  
$$
where $0 \le j \le q-2$ and $\alpha$ is a generator of $\gf(q)^*$. These $\psi_j$ are called {\em multiplicative characters} of $\gf(q)$, and form a group 
of order $q-1$ with identity element $\psi_0$. The character $\psi_{(q-1)/2}$ is called the 
{\em quadratic character} of $\gf(q)$, and is denoted by $\eta$ in this paper. We extend this quadratic character 
by letting $\eta(0)=0.$

The Gauss sum $G(\eta, \chi_1)$ over $\gf(q)$ is defined by 
\begin{eqnarray}
G(\eta, \chi_1)=\sum_{c \in \gf(q)^*} \eta(c) \chi_1(c) = \sum_{c \in \gf(q)} \eta(c) \chi_1(c)
\end{eqnarray} 
and 
the Gauss sum $G(\bar{\eta}, \bar{\chi}_1)$ over $\gf(p)$ is defined by 
\begin{eqnarray}
G(\bar{\eta}, \bar{\chi}_1)=\sum_{c \in \gf(p)^*} \bar{\eta}(c) \bar{\chi}_1(c) 
= \sum_{c \in \gf(p)} \bar{\eta}(c) \bar{\chi}_1(c), 
\end{eqnarray} 
where $\bar{\eta}$ and $\bar{\chi}_1$ are the quadratic and canonical additive characters of $\gf(p)$, 
respectively.  

The following lemma is proved in \cite[Theorem 5.15]{LN}. 

\begin{lemma}\label{lem-32A1}
With the symbols and notation above, we have   
$$ 
G(\eta, \chi_1)=(-1)^{m-1} \sqrt{-1}^{(\frac{p-1}{2})^2 m} \sqrt{q}
$$
and 
$$ 
G(\bar{\eta}, \bar{\chi}_1)= \sqrt{-1}^{(\frac{p-1}{2})^2 } \sqrt{p}. 
$$
\end{lemma}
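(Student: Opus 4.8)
The plan is to reduce the statement over $\gf(q)$ to the statement over the prime field $\gf(p)$ via the Davenport--Hasse lifting relation, and then to prove the prime-field case, which is the classical evaluation of the quadratic Gauss sum. So I would prove the second formula first and deduce the first from it.

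First I treat $\gf(p)$ and write $G=G(\bar{\eta},\bar{\chi}_1)$. The routine part is to square it. Since complex conjugation sends $\bar{\chi}_1(c)$ to $\bar{\chi}_1(-c)$ and $\bar{\eta}$ is real-valued, $\overline{G}=\sum_{c\in\gf(p)^*}\bar{\eta}(c)\bar{\chi}_1(-c)=\bar{\eta}(-1)G$, while a direct double-sum evaluation gives $G\overline{G}=p$. Hence $G^2=\bar{\eta}(-1)\,p=(-1)^{(p-1)/2}p$, so that $G=\pm\sqrt p$ when $p\equiv 1\pmod 4$ and $G=\pm\sqrt{-1}\,\sqrt p$ when $p\equiv 3\pmod 4$. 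The factor $\sqrt{-1}^{(\frac{p-1}{2})^2}$ appearing in the claim equals $1$ in the first case and $\sqrt{-1}$ in the second, so it encodes precisely these two candidate values, and the whole problem collapses to fixing one overall sign.

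This sign determination is the genuine obstacle and carries all the depth of the result. I would resolve it by Schur's eigenvalue method. Using $\sum_{c\in\gf(p)}\bar{\chi}_1(c)=0$ together with $1+\bar{\eta}(c)=\#\{y\in\gf(p):y^2=c\}$, one rewrites $G=\sum_{j=0}^{p-1}\epsilon_p^{j^2}=\tr(M)$, where $M=(\epsilon_p^{jk})_{0\le j,k\le p-1}$. A direct computation shows $M^2=p\,R$ with $R$ the permutation matrix of $j\mapsto -j$, so $M^4=p^2 I$ and the eigenvalues of $M$ lie in $\{\pm\sqrt p,\pm\sqrt{-1}\,\sqrt p\}$. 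Denoting their multiplicities by $a,b,c,d$, the relations $a+b+c+d=p$ and $\tr(M^2)=p(a+b-c-d)=p$ fix $a+b$ and $c+d$, while evaluating $\det(M)$ as the Vandermonde determinant $\prod_{0\le j<k\le p-1}(\epsilon_p^{k}-\epsilon_p^{j})$ supplies the remaining relation that pins the multiplicities down uniquely. Reading off $\tr(M)=\sqrt p\,(a-b)+\sqrt{-1}\,\sqrt p\,(c-d)$ then yields $G=\sqrt{-1}^{(\frac{p-1}{2})^2}\sqrt p$, the second formula. (An analytic alternative is Dirichlet's route through the transformation law of the theta function and Poisson summation; either way the sign is the hard step.)

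Finally I lift to $\gf(q)$. The quadratic character of $\gf(q)$ is the lift $\eta=\bar{\eta}\circ N$ of that of $\gf(p)$ along the norm map $N\colon\gf(q)\to\gf(p)$, and the canonical additive character factors as $\chi_1=\bar{\chi}_1\circ\tr$. The Davenport--Hasse relation then reads $-G(\eta,\chi_1)=\bigl(-G(\bar{\eta},\bar{\chi}_1)\bigr)^m$, whence $G(\eta,\chi_1)=(-1)^{m-1}G(\bar{\eta},\bar{\chi}_1)^m=(-1)^{m-1}\bigl(\sqrt{-1}^{(\frac{p-1}{2})^2}\sqrt p\,\bigr)^m=(-1)^{m-1}\sqrt{-1}^{(\frac{p-1}{2})^2 m}\sqrt q$, which is the first asserted formula. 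The one imported ingredient is the Davenport--Hasse relation itself, which I would either cite or prove separately by comparing the power series (local $L$-functions) attached to the two pairs of characters; granting it, the remainder is bookkeeping.
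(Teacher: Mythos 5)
Your proposal is correct, and it is essentially the same argument as the paper's: the paper gives no proof of this lemma at all, simply citing Theorem 5.15 of Lidl and Niederreiter, and the proof there proceeds exactly as you do---reduce to the prime field via the Davenport--Hasse relation and settle the sign of the prime-field quadratic Gauss sum by Schur's eigenvalue/determinant computation for the matrix $(\epsilon_p^{jk})$. Nothing further is needed.
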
 

We will need the following lemma \cite[Theorem 5.33]{LN}. 

\begin{lemma}\label{lem-32A2}
Let $\chi$ be a nontrivial additive character of $\gf(q)$ with $q$ odd, and let 
$f(x)=a_2x^2+a_1x+a_0 \in \gf(q)[x]$ with $a_2 \ne 0$. Then 
$$ 
\sum_{c \in \gf(q)} \chi(f(c)) = \chi(a_0-a_1^2(4a_2)^{-1}) \eta(a_2) G(\eta, \chi).   
$$
\end{lemma}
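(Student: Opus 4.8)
The plan is to reduce the general quadratic sum to the single "pure" quadratic sum $\sum_{c \in \gf(q)} \chi(a_2 c^2)$ by completing the square, and then to evaluate that sum by counting square roots and invoking the definition of $G(\eta, \chi)$.

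First I would complete the square. Since $q$ is odd, both $2a_2$ and $4a_2$ are invertible in $\gf(q)$, so one may write
\[
f(x) = a_2\left(x + a_1(2a_2)^{-1}\right)^2 + \left(a_0 - a_1^2(4a_2)^{-1}\right).
\]
The additive property $\chi(u+v)=\chi(u)\chi(v)$ then factors out the constant term, giving
\[
\sum_{c \in \gf(q)} \chi(f(c)) = \chi\!\left(a_0 - a_1^2(4a_2)^{-1}\right) \sum_{c \in \gf(q)} \chi\!\left(a_2\left(c + a_1(2a_2)^{-1}\right)^2\right).
\]
Because $c \mapsto c - a_1(2a_2)^{-1}$ is a bijection of $\gf(q)$, the inner sum equals $\sum_{c \in \gf(q)} \chi(a_2 c^2)$. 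It therefore remains only to establish the identity $\sum_{c \in \gf(q)} \chi(a_2 c^2) = \eta(a_2) G(\eta, \chi)$.

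The second step evaluates this sum by sorting the terms according to the value of $c^2$. For each $u \in \gf(q)$ the number of $c \in \gf(q)$ with $c^2 = u$ is $1 + \eta(u)$, using the convention $\eta(0)=0$ (so $u=0$ is attained once, each nonzero square twice, and each nonsquare not at all). Hence
\[
\sum_{c \in \gf(q)} \chi(a_2 c^2) = \sum_{u \in \gf(q)} (1 + \eta(u)) \chi(a_2 u) = \sum_{u \in \gf(q)} \chi(a_2 u) + \sum_{u \in \gf(q)} \eta(u) \chi(a_2 u).
\]
Since $\chi$ is nontrivial and $a_2 \neq 0$, the map $u \mapsto \chi(a_2 u)$ is again a nontrivial additive character of $\gf(q)$, so by orthogonality its sum over $\gf(q)$ vanishes, killing the first term.

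Finally I would identify the remaining sum with the Gauss sum. Substituting $u = a_2^{-1} v$ and using that $\eta$ is multiplicative with $\eta(a_2^{-1}) = \eta(a_2)$ (as $\eta(a_2) = \pm 1$), one obtains $\sum_{u} \eta(u)\chi(a_2 u) = \eta(a_2) \sum_{v} \eta(v)\chi(v) = \eta(a_2) G(\eta, \chi)$. Assembling the three steps yields the claimed formula. No step presents a genuine obstacle; the only points requiring care are the convention $\eta(0)=0$ in the square-root count $1+\eta(u)$ and the bookkeeping in the multiplicative substitution that produces the factor $\eta(a_2)$.
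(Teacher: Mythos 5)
Your proposal is correct. The paper gives no proof of this lemma at all --- it simply cites \cite[Theorem 5.33]{LN} --- and your argument (complete the square, reduce to $\sum_{c}\chi(a_2c^2)$, count square roots via $1+\eta(u)$, kill the pure character sum by orthogonality, and pull out $\eta(a_2)=\eta(a_2^{-1})$ by substitution) is precisely the standard textbook proof of that result, carried out completely and with the right care about the convention $\eta(0)=0$.
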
 

The conclusion of the following lemma is straightforward. For completeness, we provide a proof 
below. 

\begin{lemma}\label{lem-bothcharac}
If $m \ge 2$ is even, then $\eta(y)=1$ for each $y \in \gf(p)^*$. 
If $m$ is odd, then  $\eta(y)=\bar{\eta}(y)$ for each $y \in \gf(p)$. 
\end{lemma}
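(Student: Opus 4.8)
The plan is to exploit the relationship between the quadratic character $\eta$ of $\gf(q)$ and the quadratic character $\bar\eta$ of the subfield $\gf(p)$, by analyzing how the norm map interacts with these characters. First I would recall that for $y \in \gf(p)^* \subseteq \gf(q)^*$, whether $\eta(y)=1$ depends on whether $y$ is a square in $\gf(q)^*$; equivalently, writing $y=\alpha^k$ for a generator $\alpha$ of $\gf(q)^*$, we have $\eta(y)=(-1)^k$. The key arithmetic fact is that the quadratic character on the subfield relates to the quadratic character on the big field through the group index: $\gf(p)^*$ is the subgroup of $\gf(q)^*$ of order $p-1$, so $\gf(p)^* = \langle \alpha^{(q-1)/(p-1)}\rangle$.

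For the first claim, let $m$ be even. Then $\gf(p)^* \subseteq \gf(q)^*$ and I would show every element of $\gf(p)^*$ is a square in $\gf(q)^*$. The cleanest route is to note that $\gf(q)^*$ contains $\gf(p^2)^*$ as a subgroup when $m$ is even (since $2 \mid m$), and in fact $\gf(p)^* \subseteq \gf(p^2)^*$. More directly: the squares in $\gf(q)^*$ form the unique subgroup of index $2$, namely the elements $\alpha^{2j}$. Since $|\gf(p)^*| = p-1$ divides $(q-1)/(p-1)\cdot(p-1)$, and with $m$ even one checks that $(q-1)/(p-1) = 1+p+\cdots+p^{m-1}$ is even (because it is a sum of $m$ odd terms when $p$ is odd, and $m$ even makes this sum even). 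Hence $\alpha^{(q-1)/(p-1)}$, the generator of $\gf(p)^*$, is an even power of $\alpha$, so all of $\gf(p)^*$ consists of squares in $\gf(q)^*$, giving $\eta(y)=1$.

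For the second claim, let $m$ be odd and take $y \in \gf(p)$. The case $y=0$ is immediate since $\eta(0)=\bar\eta(0)=0$ by our convention, so assume $y \in \gf(p)^*$. Here I would invoke the standard compatibility between quadratic characters over a field tower of odd degree: by the norm relation, $\eta(y) = \bar\eta(N_{\gf(q)/\gf(p)}(y))$ where $N$ is the norm map, and for $y \in \gf(p)^*$ one has $N_{\gf(q)/\gf(p)}(y) = y^{1+p+\cdots+p^{m-1}} = y^{(q-1)/(p-1)}$. Since $m$ is odd, the exponent $(q-1)/(p-1) = 1+p+\cdots+p^{m-1}$ is a sum of $m$ odd terms and is therefore odd, so $y^{(q-1)/(p-1)} = y^{\mathrm{odd}}$ and $\bar\eta(y^{\mathrm{odd}}) = \bar\eta(y)$ because $\bar\eta$ is multiplicative of order $2$. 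This yields $\eta(y)=\bar\eta(y)$.

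The main obstacle will be pinning down the norm-character compatibility $\eta = \bar\eta \circ N$ cleanly rather than hand-waving it; the honest way is to verify directly via generators. Writing $y = \beta^j$ for a generator $\beta$ of $\gf(p)^*$, one has $\bar\eta(y)=(-1)^j$, while as an element of $\gf(q)^*$ we have $y = \alpha^{j(q-1)/(p-1)}$, so $\eta(y) = (-1)^{j(q-1)/(p-1)}$. The two signs agree precisely when $(q-1)/(p-1)$ is odd, which is exactly the odd-$m$ case, and the first claim's computation shows $(q-1)/(p-1)$ is even exactly when $m$ is even. Thus the entire lemma reduces to the single parity statement that $1+p+\cdots+p^{m-1}$ has the same parity as $m$ (each summand being odd since $p$ is odd), which I would state and verify first and then feed into both parts.
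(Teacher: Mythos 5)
Your proposal is correct and, once you strip away the optional norm-map detour, it is essentially the paper's own argument: write $y=\alpha^{j(q-1)/(p-1)}$ for a generator $\alpha$ of $\gf(q)^*$ and reduce everything to the parity fact that $(q-1)/(p-1)=1+p+\cdots+p^{m-1}\equiv m \pmod 2$. The explicit observation that each of the $m$ summands is odd is a nice touch that the paper leaves implicit.
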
  

\begin{proof}
Let $\alpha$ be a generator of $\gf(q)^*$. Notice that every $y \in \gf(p)^*$ can be expressed 
as $\alpha^{\frac{q-1}{p-1}j}$, where $0 \le j \le p-2$. We have 
\begin{eqnarray*}
\frac{q-1}{p-1} \bmod 2 = m \bmod 2. 
\end{eqnarray*} 
Hence, every element $y \in \gf(p)^*$ is a square in $\gf(q)$ when $m$ is an even positive integer, 
and $\eta(y)=\bar{\eta}(y)$ for each $y \in \gf(p)$ when $m$ is odd.  
This completes the proof. 
\end{proof}

Below we prove a few more auxiliary results before proving the main results of this paper. 

\begin{lemma}\label{lem-32B1} 
We have the following equality: 
\begin{eqnarray*}
\sum_{y \in \gf(p)^*} \sum_{x \in \gf(q)} \epsilon_p^{y\tr(x^2)} = 
\left\{ \begin{array}{ll}
0 & \mbox{ if $m$ odd,} \\
(-1)^{m-1} (-1)^{(\frac{p-1}{2})^2 \frac{m}{2}} (p-1)\sqrt{q} & \mbox{ if $m$ even.} 
\end{array}
\right. 
\end{eqnarray*}
\end{lemma}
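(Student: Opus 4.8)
The plan is to evaluate the double sum
\[
S := \sum_{y \in \gf(p)^*} \sum_{x \in \gf(q)} \epsilon_p^{y\tr(x^2)}
\]
by swapping the order of summation and applying Lemma~\ref{lem-32A2} to the inner sum over $x$. First I would fix $y \in \gf(p)^*$ and observe that $\epsilon_p^{y\tr(x^2)} = \chi_1(yx^2)$, where $\chi_1$ is the canonical additive character of $\gf(q)$; this identifies the inner sum as $\sum_{x \in \gf(q)} \chi_1(yx^2)$. This is precisely a Gauss-type quadratic sum of the form treated in Lemma~\ref{lem-32A2}, with $f(x)=yx^2$, i.e. $a_2=y$, $a_1=a_0=0$. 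Applying that lemma gives
\[
\sum_{x \in \gf(q)} \chi_1(yx^2) = \eta(y)\, G(\eta, \chi_1),
\]
since the $\chi_1(a_0 - a_1^2(4a_2)^{-1})$ factor collapses to $\chi_1(0)=1$.

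Next I would pull the constant Gauss sum outside and sum over $y$, reducing the problem to
\[
S = G(\eta, \chi_1) \sum_{y \in \gf(p)^*} \eta(y).
\]
The remaining character sum $\sum_{y \in \gf(p)^*} \eta(y)$ is where the parity of $m$ enters, and this is the step I expect to carry the real content. Here Lemma~\ref{lem-bothcharac} is exactly the tool: it tells me how $\eta$ (the quadratic character of $\gf(q)$) behaves when restricted to $\gf(p)^*$. When $m$ is even, Lemma~\ref{lem-bothcharac} says $\eta(y)=1$ for every $y \in \gf(p)^*$, so the sum over $y$ is simply $p-1$. When $m$ is odd, $\eta(y)=\bar\eta(y)$ on $\gf(p)$, and since $\bar\eta$ is a nontrivial multiplicative character of $\gf(p)^*$, the sum $\sum_{y \in \gf(p)^*} \bar\eta(y)$ vanishes by the standard orthogonality of characters (a nontrivial character sums to zero over the group). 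This immediately yields $S=0$ in the odd case, matching the claimed value.

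Finally, in the even case I would substitute the explicit value of the Gauss sum from Lemma~\ref{lem-32A1}, namely $G(\eta,\chi_1) = (-1)^{m-1}\sqrt{-1}^{(\frac{p-1}{2})^2 m}\sqrt{q}$, and multiply by $p-1$. The only bookkeeping to check is that $\sqrt{-1}^{(\frac{p-1}{2})^2 m}$ reduces to the sign $(-1)^{(\frac{p-1}{2})^2 \frac{m}{2}}$ when $m$ is even; this follows because $\sqrt{-1}^{2} = -1$, so $\sqrt{-1}^{(\frac{p-1}{2})^2 m} = \left(\sqrt{-1}^{2}\right)^{(\frac{p-1}{2})^2 \frac{m}{2}} = (-1)^{(\frac{p-1}{2})^2 \frac{m}{2}}$, which is the stated sign. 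Collecting the factors gives $S = (-1)^{m-1}(-1)^{(\frac{p-1}{2})^2 \frac{m}{2}}(p-1)\sqrt{q}$, completing the even case. The main obstacle is really just the careful tracking of the fourth-root-of-unity factor into a genuine $\pm 1$ sign; the structural steps (interchange of summation, Lemma~\ref{lem-32A2}, and the two-case split via Lemma~\ref{lem-bothcharac}) are routine once the inner sum is recognized as a quadratic Gauss sum.
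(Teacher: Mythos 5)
Your proposal is correct and follows essentially the same route as the paper's proof: apply Lemma~\ref{lem-32A2} to the inner sum to get $G(\eta,\chi_1)\sum_{y\in\gf(p)^*}\eta(y)$, then evaluate the remaining character sum via Lemma~\ref{lem-bothcharac} according to the parity of $m$, and finish with Lemma~\ref{lem-32A1}. Your extra check that $\sqrt{-1}^{(\frac{p-1}{2})^2 m}=(-1)^{(\frac{p-1}{2})^2\frac{m}{2}}$ for even $m$ is a correct piece of bookkeeping that the paper leaves implicit.
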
 

\begin{proof}
By Lemma \ref{lem-32A2}, we have 
\begin{eqnarray*}
\sum_{y \in \gf(p)^*} \sum_{x \in \gf(q)} \epsilon_p^{y\tr(x^2)} = 
G(\eta, \chi_1) \sum_{y \in \gf(p)^*} \eta(y).
\end{eqnarray*}
Using Lemma \ref{lem-bothcharac}, we obtain 
\begin{eqnarray*}
\sum_{y \in \gf(p)^*} \eta(y) = 
\left\{ \begin{array}{ll}
0 & \mbox{ if $m$ odd,} \\
p-1 & \mbox{ if $m$ even.} 
\end{array}
\right. 
\end{eqnarray*}
The desired conclusion then follows. 
\end{proof}

The next lemma will be employed later. 

\begin{lemma}\label{lem-32B2} 
For each $a \in \gf(p)$, let 
$$
n_a=|\{x \in \gf(q): \tr(x^2)=a\}|.  
$$
Then 
\begin{eqnarray*}
n_a=\left\{ \begin{array}{ll}
p^{m-1}                               & \mbox{ if $m$ odd and $a=0$,} \\
p^{m-1}- (-1)^{(\frac{p-1}{2})^2 \frac{m}{2}} (p-1)p^{\frac{m-2}{2}} & \mbox{ if $m$ even and $a=0$,} \\
p^{m-1}- \bar{\eta}(a) (-1)^{\frac{p-1}{2}} (-1)^{(\frac{p-1}{2})^2 (\frac{m+1}{2})} p^{\frac{m-1}{2}} & \mbox{ if $m$ odd and $a\ne 0$,} \\
p^{m-1}+ (-1)^{(\frac{p-1}{2})^2 \frac{m}{2}} p^{\frac{m-2}{2}} & \mbox{ if $m$ even and $a \ne 0$.} 
\end{array}
\right. 
\end{eqnarray*}
\end{lemma}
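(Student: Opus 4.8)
The plan is to count $n_a$ by the standard additive-character (Fourier) method over $\gf(q)$. Using orthogonality of the additive characters of $\gf(p)$, I would first write
$$n_a = \frac{1}{p}\sum_{y\in\gf(p)}\sum_{x\in\gf(q)}\epsilon_p^{y(\tr(x^2)-a)} = p^{m-1} + \frac{1}{p}\sum_{y\in\gf(p)^*}\epsilon_p^{-ay}\sum_{x\in\gf(q)}\epsilon_p^{y\tr(x^2)},$$
where the term $y=0$ contributes the main term $q/p=p^{m-1}$. Since $\epsilon_p^{y\tr(x^2)}=\chi_1(yx^2)$, the inner sum is a quadratic Gaussian sum that Lemma \ref{lem-32A2} evaluates directly: taking $f(x)=yx^2$ (so $a_2=y\ne0$ and $a_1=a_0=0$) gives $\sum_{x\in\gf(q)}\chi_1(yx^2)=\eta(y)\,G(\eta,\chi_1)$. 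Hence the whole computation reduces to the twisted sum $T_a:=\sum_{y\in\gf(p)^*}\eta(y)\epsilon_p^{-ay}$ through the identity $n_a=p^{m-1}+\tfrac1p\,G(\eta,\chi_1)\,T_a$.

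Next I would evaluate $T_a$ in the four cases, the key tool being Lemma \ref{lem-bothcharac}, which describes $\eta$ restricted to $\gf(p)^*$. When $m$ is even, $\eta(y)=1$ on $\gf(p)^*$, so $T_0=p-1$, while for $a\ne0$ we get $T_a=\sum_{y\in\gf(p)^*}\epsilon_p^{-ay}=-1$. When $m$ is odd, $\eta(y)=\bar\eta(y)$ on $\gf(p)$, so $T_0=\sum_{y\in\gf(p)^*}\bar\eta(y)=0$ because $\bar\eta$ is a nontrivial multiplicative character, which already forces $n_0=p^{m-1}$; and for $a\ne0$ the substitution $y\mapsto(-a)^{-1}y$ turns $T_a$ into a Gauss sum over $\gf(p)$, namely $T_a=\bar\eta(-a)\,G(\bar\eta,\bar\chi_1)=(-1)^{\frac{p-1}{2}}\bar\eta(a)\,G(\bar\eta,\bar\chi_1)$, where I use $\bar\eta(-1)=(-1)^{(p-1)/2}$.

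Finally I would insert the closed forms of the Gauss sums from Lemma \ref{lem-32A1}, namely $G(\eta,\chi_1)=(-1)^{m-1}\sqrt{-1}^{(\frac{p-1}{2})^2 m}\sqrt q$ and $G(\bar\eta,\bar\chi_1)=\sqrt{-1}^{(\frac{p-1}{2})^2}\sqrt p$, and simplify $n_a=p^{m-1}+\tfrac1p\,G(\eta,\chi_1)\,T_a$ using $\sqrt q=p^{m/2}$ and $\sqrt{pq}=p^{(m+1)/2}$. The bulk of the work, and the only genuine obstacle, is the bookkeeping of the powers of $\sqrt{-1}$: I must convert $\sqrt{-1}^{(\frac{p-1}{2})^2 m}$ and $\sqrt{-1}^{(\frac{p-1}{2})^2(m+1)}$ into honest signs $(-1)^{(\frac{p-1}{2})^2 m/2}$ and $(-1)^{(\frac{p-1}{2})^2(m+1)/2}$, which is legitimate precisely because $m$ (respectively $m+1$) is even in the case at hand, and then to combine them with the factor $(-1)^{m-1}$ and the quadratic sign $(-1)^{(p-1)/2}$. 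A consistency check such as $\sum_{a\in\gf(p)}n_a=q$, together with a small instance like $m=1$, is the safeguard I would use to make sure every power of $\sqrt{-1}$ and every quadratic-character sign has been resolved correctly and matches the four entries in the statement.
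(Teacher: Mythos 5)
Your route is exactly the paper's: orthogonality of the additive characters of $\gf(p)$, Lemma \ref{lem-32A2} to reduce $\sum_{x\in\gf(q)}\chi_1(yx^2)$ to $\eta(y)G(\eta,\chi_1)$, Lemma \ref{lem-bothcharac} to evaluate the twisted sum $T_a=\sum_{y\in\gf(p)^*}\eta(y)\epsilon_p^{-ay}$ in the four cases, and Lemma \ref{lem-32A1} for the explicit Gauss sums. Your intermediate steps are correct, and in fact more careful than the paper's own proof, which writes $\epsilon_p^{ya}$ where orthogonality produces $\epsilon_p^{-ya}$ and therefore silently loses the factor $\bar{\eta}(-1)=(-1)^{(p-1)/2}$ that you correctly extract through the substitution $y\mapsto(-a)^{-1}y$.

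The one genuine obstacle is that, if you finish the bookkeeping honestly, the case ``$m$ odd and $a\ne 0$'' comes out as $n_a=p^{m-1}+(-1)^{\frac{p-1}{2}}\bar{\eta}(a)(-1)^{(\frac{p-1}{2})^2\frac{m+1}{2}}p^{\frac{m-1}{2}}$, i.e.\ with the \emph{opposite} sign on the deviation term from the one displayed in the lemma. Your own proposed safeguard settles which is right: for $m=1$ and $p=3$ one has $n_1=2$ and $n_2=0$, matching the plus sign (likewise $p=5$, $m=1$ gives $n_1=n_4=2$, $n_2=n_3=0$), whereas the stated formula would give $n_1=0$. Be aware that the check $\sum_{a\in\gf(p)}n_a=q$ cannot detect this, because $\sum_{a\ne0}\bar{\eta}(a)=0$ kills the deviation terms regardless of their sign; only a direct numerical instance does. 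So do not force your computation to agree with the statement: the third case of the lemma has a sign error (read literally, the paper's own proof chain yields yet a third expression, $p^{m-1}+\bar{\eta}(a)(-1)^{(\frac{p-1}{2})^2\frac{m+1}{2}}p^{\frac{m-1}{2}}$, and an unexplained extra sign appears in its final step). Your derivations of the other three cases are correct and match the statement; fortunately only the $a=0$ entries are used elsewhere in the paper (for $|D|=n_0-1$ and for the weights via $n_0-N(b)$), so nothing downstream depends on the disputed case.
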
 

\begin{proof}
It follows from Lemma \ref{lem-32A2} that 
\begin{eqnarray*}
n_a &=& \frac{1}{p} \sum_{x \in \gf(q)} \sum_{y \in \gf(p)} \epsilon_p^{y(\tr(x^2)-a)} \\
&=& p^{m-1}+ \frac{1}{p} \sum_{y \in \gf(p)^*} \epsilon_p^{ya}  \sum_{x \in \gf(q)}   \epsilon_p^{\tr(yx^2)} \\
&=& p^{m-1}+ \frac{1}{p} G(\eta, \chi_1) \sum_{y \in \gf(p)^*} \epsilon_p^{ya} \eta(y)  \\
&=& \left\{ \begin{array}{ll}
        p^{m-1}+ \frac{1}{p} G(\eta, \chi_1) \sum_{y \in \gf(p)^*} \eta(y) & \mbox{ if $a=0$} \\
        p^{m-1}+ \frac{1}{p} \eta(a)  G(\eta, \chi_1) \sum_{z \in \gf(p)^*} \epsilon_p^{z} \eta(z) & \mbox{ if $a\ne 0$} 
\end{array}
\right. \\ 
&=& \left\{ \begin{array}{ll}
        p^{m-1}                                                                                               & \mbox{ if $m$ odd and $a=0$,} \\
        p^{m-1}+ \frac{p-1}{p} G(\eta, \chi_1)  & \mbox{ if $m$ even and $a=0$,} \\        
        p^{m-1}+ \frac{\bar{\eta}(a)}{p}   G(\eta, \chi_1) G(\bar{\eta}, \bar{\chi}_1) & \mbox{ if $m$ odd and $a\ne 0$,} \\ 
        p^{m-1}- \frac{1}{p}  G(\eta, \chi_1)  & \mbox{ if $m$ even and $a\ne 0$,}         
\end{array}
\right. 
\end{eqnarray*}
where the first equality follows from the fact that $\sum_{y \in \gf(p)} \bar{\chi}_1(yz)=0$ for every 
$z \in \gf(p)^*$. 
The desired conclusion then follows from Lemma \ref{lem-32A1}. 
\end{proof}

The following result will play an important role in proving the main results of this paper. 

\begin{lemma}\label{lem-32B3}
Let $b \in \gf(q)^*$. Then 
\begin{eqnarray*}
\lefteqn{\sum_{y \in \gf(p)^*} \sum_{z \in \gf(p)^*} \sum_{x \in \gf(q)}  \epsilon_p^{\tr(yx^2+bzx)} } \\ 
&=&  
\left\{ \begin{array}{ll}
0                                                                  & \mbox{if $m$ odd and $\tr(b^2)=0$,} \\
-\bar{\eta}(\tr(b^2)) (-1)^{(\frac{p-1}{2})^2 (\frac{m+1}{2})} (p-1) p^{\frac{m+1}{2}} & \mbox{if $m$ odd and $\tr(b^2)\ne 0$,} \\
- (-1)^{(\frac{p-1}{2})^2 \frac{m}{2}} (p-1)^2 p^{\frac{m}{2}} & \mbox{if $m$ even and $\tr(b^2) = 0$,} \\
 (-1)^{(\frac{p-1}{2})^2 \frac{m}{2}} (p-1) p^{\frac{m}{2}} & \mbox{if $m$ even and $\tr(b^2)\ne 0$.} \\
\end{array}
\right. 
\end{eqnarray*}
\end{lemma}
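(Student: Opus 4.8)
The plan is to complete the square in the inner exponent and reduce the triple sum to evaluations already handled by Lemmas \ref{lem-32A1} and \ref{lem-32A2}. Fix $b \in \gf(q)^*$ and nonzero $y,z \in \gf(p)^*$. Viewing $f(x)=yx^2 + bzx$ as a quadratic with leading coefficient $y \in \gf(p)^* \subseteq \gf(q)^*$ and linear coefficient $bz$, Lemma \ref{lem-32A2} applied to the canonical character $\chi_1$ gives
\begin{eqnarray*}
\sum_{x \in \gf(q)} \chi_1(yx^2 + bzx) = \chi_1\!\left(-\frac{(bz)^2}{4y}\right) \eta(y)\, G(\eta, \chi_1).
\end{eqnarray*}
Since $z^2/(4y) \in \gf(p)^*$, the argument of $\chi_1$ is a $\gf(p)$-multiple of $b^2$, so $\chi_1(-(bz)^2/(4y)) = \epsilon_p^{-z^2(4y)^{-1}\tr(b^2)}$. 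First I would substitute this into the triple sum and factor out $G(\eta,\chi_1)$, leaving
\begin{eqnarray*}
G(\eta, \chi_1) \sum_{y \in \gf(p)^*} \eta(y) \sum_{z \in \gf(p)^*} \epsilon_p^{-z^2(4y)^{-1}\tr(b^2)}.
\end{eqnarray*}

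Next I would split on the parity of $m$ using Lemma \ref{lem-bothcharac}. When $m$ is even, $\eta(y)=1$ for all $y \in \gf(p)^*$, so the outer factor $\sum_y \eta(y)$ contributes simply, and the inner $z$-sum becomes a sum of $\bar\chi_1$-values over a quadratic argument in $\gf(p)$, which I would evaluate by the $\gf(p)$ analogue of Lemma \ref{lem-32A2} (or directly by counting squares). When $m$ is odd, $\eta(y)=\bar\eta(y)$, so $\eta(y)$ inside the $y$-sum becomes the quadratic character of $\gf(p)$, and the combined $y,z$ double sum over $\gf(p)^*$ with the quadratic-twisted character factors into a product involving $G(\bar\eta, \bar\chi_1)$; the substitution $w = z^2(4y)^{-1}$ (tracking how many $(y,z)$ pairs give each $w$, using that $z^2$ ranges over nonzero squares) is the natural device here.

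The case split on whether $\tr(b^2)=0$ or $\tr(b^2)\ne 0$ then enters through the inner character sum: when $\tr(b^2)=0$ every summand is $1$, giving a clean count of pairs $(y,z)$, whereas when $\tr(b^2)\ne 0$ the inner sum collapses to a quadratic Gauss sum over $\gf(p)$ carrying the factor $\bar\eta(\tr(b^2))$. In each of the four cases I would finish by inserting the explicit values $G(\eta,\chi_1) = (-1)^{m-1}\sqrt{-1}^{(\frac{p-1}{2})^2 m}\sqrt{q}$ and $G(\bar\eta,\bar\chi_1) = \sqrt{-1}^{(\frac{p-1}{2})^2}\sqrt{p}$ from Lemma \ref{lem-32A1}, and simplifying the powers of $\sqrt{-1}$ into the sign $(-1)^{(\frac{p-1}{2})^2 \lceil m/2 \rceil}$ displayed in the statement.

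The main obstacle I anticipate is bookkeeping the powers of $\sqrt{-1}$ and the $\bar\eta$-factors correctly across the four cases, especially reconciling the product $G(\eta,\chi_1)G(\bar\eta,\bar\chi_1)$ in the odd-$m$, $\tr(b^2)\ne 0$ case into the stated form, and making sure the sign of the completed-square term $-z^2(4y)^{-1}\tr(b^2)$ propagates correctly (the minus sign is what ultimately produces the $-\bar\eta(\tr(b^2))$ and $-(p-1)^2$ factors). Everything else is a routine, if careful, evaluation of elementary and quadratic character sums over $\gf(p)$.
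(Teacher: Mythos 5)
Your plan follows the paper's proof essentially step for step: complete the square with Lemma \ref{lem-32A2}, pull out $G(\eta,\chi_1)$, use Lemma \ref{lem-bothcharac} to split on the parity of $m$, reduce the remaining double sum over $\gf(p)^*$ to $G(\bar{\eta},\bar{\chi}_1)$ (or an elementary count) according as $\tr(b^2)\ne 0$ or $\tr(b^2)=0$, and finish by inserting the values from Lemma \ref{lem-32A1}. One caution on your anticipated sign bookkeeping: the $-(p-1)^2$ in the even-$m$, $\tr(b^2)=0$ case actually comes from the factor $(-1)^{m-1}$ in $G(\eta,\chi_1)$ (the completed-square term vanishes there), and in the odd-$m$, $\tr(b^2)\ne 0$ case the minus sign in $-z^2(4y)^{-1}\tr(b^2)$ contributes $\bar{\eta}(-1)=(-1)^{(p-1)/2}$ rather than a flat $-1$, so that factor must be tracked explicitly when reconciling with the displayed formula.
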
 

\begin{proof}
It follows from Lemmas \ref{lem-32A2} and \ref{lem-bothcharac} that 
\begin{eqnarray*}
\lefteqn{\sum_{y \in \gf(p)^*} \sum_{z \in \gf(p)^*} \sum_{x \in \gf(q)}  \epsilon_p^{\tr(yx^2+bzx)} } \\
&=& G(\eta, \chi_1) \sum_{y \in \gf(p)^*} \sum_{z \in \gf(p)^*} \chi_1\left(-\frac{b^2z^2}{4y}\right) \eta(y) \\
&=& G(\eta, \chi_1) \sum_{y_1 \in \gf(p)^*} \sum_{z \in \gf(p)^*} \chi_1\left(-b^2z^2y_1\right) \eta\left(\frac{1}{4y_1}\right) \\
&=& G(\eta, \chi_1) \sum_{y_1 \in \gf(p)^*} \sum_{z \in \gf(p)^*} \chi_1\left(-b^2z^2y_1\right) \eta\left(\frac{y_1}{(2y_1)^2}\right) \\
&=& G(\eta, \chi_1) \sum_{y \in \gf(p)^*} \sum_{z \in \gf(p)^*} \chi_1\left(-b^2z^2y\right) \eta\left(y\right) \\
&=& G(\eta, \chi_1) \sum_{y \in \gf(p)^*} \sum_{z \in \gf(p)^*} \epsilon_p^{-z^2\tr(b^2)y} \eta\left(y\right) \\
&=& \left\{ \begin{array}{ll}
G(\eta, \chi_1) \sum_{z \in \gf(p)^*} \sum_{y \in \gf(p)^*} \eta\left(y\right) & \mbox{ if $\tr(b^2)=0$} \\
 G(\eta, \chi_1) \sum_{z \in \gf(p)^*} \sum_{y \in \gf(p)^*} \epsilon_p^{-z^2\tr(b^2)y} \eta(-z^2\tr(b^2)y) \eta(-\tr(b^2)) 
 & \mbox{ if $\tr(b^2) \ne 0$} 
\end{array}
\right. \\
&=& \left\{ \begin{array}{ll}
G(\eta, \chi_1) (p-1) \sum_{y \in \gf(p)^*} \eta\left(y\right) & \mbox{ if $\tr(b^2)=0$} \\
 G(\eta, \chi_1) \eta(-\tr(b^2)) (p-1) \sum_{y \in \gf(p)^*} \epsilon_p^{y} \eta(y)  
 & \mbox{ if $\tr(b^2) \ne 0$} 
\end{array}
\right. \\
&=& \left\{ \begin{array}{ll}
0                              & \mbox{ if $m$ odd and $\tr(b^2)=0$,} \\
 G(\eta, \chi_1)  G(\bar{\eta}, \bar{\chi}_1) \eta(-\tr(b^2)) (p-1)  
 & \mbox{ if $m$ odd and $\tr(b^2) \ne 0$,} \\ 
G(\eta, \chi_1) (p-1)^2  & \mbox{ if $m$ even and $\tr(b^2)=0$,} \\
-G(\eta, \chi_1) (p-1)  & \mbox{ if $m$ even and $\tr(b^2) \ne 0$.} 
\end{array}
\right. 
\end{eqnarray*}
The desired conclusions then follow from Lemmas \ref{lem-32A1} and \ref{lem-bothcharac}. 
\end{proof}

The last auxiliary result we need is the following. 

\begin{lemma}\label{lem-32B4} 
For any $b \in \gf(q)^*$ and any $a \in \gf(p)$, let 
$$ 
N(b)=|\{x \in \gf(q): \tr(x^2)=0 \mbox{ and } \tr(bx)=0\}|. 
$$
Then 
\begin{eqnarray*}
N(b)= \left\{ \begin{array}{ll}
p^{m-2}                    &   \mbox{if $m$ odd and $\tr(b^2)=0$,} \\
p^{m-2}-\bar{\eta}(\tr(b^2)) (-1)^{\left( \frac{p-1}{2}\right)^2 (\frac{m+1}{2})} (p-1)p^{\frac{m-3}{2}}          
              &   \mbox{if $m$ odd and $\tr(b^2) \ne 0$,} \\
p^{m-2}- (-1)^{\left( \frac{p-1}{2}\right)^2 \frac{m}{2}} (p-1)p^{\frac{m-2}{2}}          
              &   \mbox{if $m$ even and $\tr(b^2) = 0$,} \\   
p^{m-2}                    &   \mbox{if $m$ even and $\tr(b^2)\ne 0$.}                          
\end{array}
\right. 
\end{eqnarray*}
\end{lemma}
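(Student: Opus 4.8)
The plan is to evaluate $N(b)$ by the standard orthogonality trick: write each of the two trace constraints $\tr(x^2)=0$ and $\tr(bx)=0$ as an averaged additive character sum over $\gf(p)$, and then recognize the resulting triple sums as exactly the quantities already computed in Lemmas \ref{lem-32B1} and \ref{lem-32B3}.

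First I would use the fact that $\frac{1}{p}\sum_{y\in\gf(p)}\epsilon_p^{y\tr(x^2)}$ equals $1$ when $\tr(x^2)=0$ and $0$ otherwise, together with the analogous expression in a variable $z$ for the condition $\tr(bx)=0$. Multiplying the two indicators and summing over $x$ gives
\begin{eqnarray*}
N(b)=\frac{1}{p^2}\sum_{y\in\gf(p)}\sum_{z\in\gf(p)}\sum_{x\in\gf(q)}\epsilon_p^{\tr(yx^2+bzx)}.
\end{eqnarray*}
Next I would partition the outer summation over $(y,z)$ into four pieces according to whether $y$ and $z$ vanish. The term $(y,z)=(0,0)$ contributes $q/p^2=p^{m-2}$. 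The terms with $y=0$ and $z\ne0$ contribute $\frac{1}{p^2}\sum_{z\in\gf(p)^*}\sum_{x\in\gf(q)}\epsilon_p^{\tr(bzx)}$, which vanishes because $bz\ne0$ makes each inner sum a complete sum of a nontrivial additive character over $\gf(q)$. The terms with $y\ne0$ and $z=0$ give $\frac{1}{p^2}\sum_{y\in\gf(p)^*}\sum_{x\in\gf(q)}\epsilon_p^{y\tr(x^2)}$, which is precisely $\frac{1}{p^2}$ times the quantity of Lemma \ref{lem-32B1}. Finally, the terms with both $y\ne0$ and $z\ne0$ are exactly the triple sum of Lemma \ref{lem-32B3}, divided by $p^2$. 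Hence
\begin{eqnarray*}
N(b)=p^{m-2}+\frac{1}{p^2}S_1+\frac{1}{p^2}S_3,
\end{eqnarray*}
where $S_1$ and $S_3$ denote the values supplied by Lemmas \ref{lem-32B1} and \ref{lem-32B3}.

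The remaining step is to substitute the four-way case values of $S_1$ and $S_3$ and simplify, using $\sqrt{q}=p^{m/2}$ in the even case. For $m$ odd one has $S_1=0$, so $N(b)$ collapses immediately to $p^{m-2}$ when $\tr(b^2)=0$, while for $\tr(b^2)\ne0$ the division $p^{\frac{m+1}{2}}/p^2=p^{\frac{m-3}{2}}$ reproduces the tabulated expression. For $m$ even both $S_1$ and $S_3$ are nonzero, and I expect the only delicate bookkeeping to arise in the subcase $\tr(b^2)=0$: there the two contributions combine through the identity $(p-1)+(p-1)^2=p(p-1)$, which is exactly what is needed to turn the common factor $p^{\frac{m-4}{2}}$ into $p^{\frac{m-2}{2}}$ and yield $-(-1)^{(\frac{p-1}{2})^2\frac{m}{2}}(p-1)p^{\frac{m-2}{2}}$. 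In the remaining subcase $\tr(b^2)\ne0$ the two contributions cancel, leaving $p^{m-2}$. The main obstacle is therefore not conceptual but purely the careful tracking of the sign factor $(-1)^{(\frac{p-1}{2})^2\frac{m}{2}}$ and the powers of $p$ through this final combination.
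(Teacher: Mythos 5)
Your proposal is correct and follows exactly the paper's own argument: both express $N(b)$ via the orthogonality relation as $p^{-2}\sum_{y,z\in\gf(p)}\sum_{x\in\gf(q)}\epsilon_p^{\tr(yx^2+bzx)}$, split the $(y,z)$ sum into the four pieces according to vanishing of $y$ and $z$, observe that the $y=0,z\neq 0$ piece vanishes, and invoke Lemmas \ref{lem-32B1} and \ref{lem-32B3} for the remaining two nontrivial pieces. Your final bookkeeping (including the identity $(p-1)+(p-1)^2=p(p-1)$ in the even, $\tr(b^2)=0$ case and the cancellation in the even, $\tr(b^2)\neq 0$ case) checks out against the tabulated values.
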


\begin{proof}
By definition, we have 
\begin{eqnarray*}
N(b)  
&=& p^{-2} \sum_{x \in \gf(q)} \left( \sum_{y \in \gf(p)} \epsilon_p^{y\tr(x^2)} \right)  
                                                   \left( \sum_{z \in \gf(p)} \epsilon_p^{z\tr(bx)} \right) \\
&=& p^{-2}  \sum_{z \in \gf(p)^*} \sum_{x \in \gf(q)}  \epsilon_p^{\tr(bzx)}  + 
        p^{-2}  \sum_{y \in \gf(p)^*} \sum_{x \in \gf(q)}  \epsilon_p^{\tr(yx^2)} + \\
& & p^{-2}   \sum_{y \in \gf(p)^*}  \sum_{z \in \gf(p)^*} \sum_{x \in \gf(q)}  \epsilon_p^{\tr(yx^2+bzx)} + p^{m-2}.                                        
\end{eqnarray*}
Note that 
$$ 
\sum_{z \in \gf(p)^*} \sum_{x \in \gf(q)}  \epsilon_p^{\tr(bzx)} =0. 
$$
The desired conclusions then follow from Lemmas \ref{lem-32B1} and \ref{lem-32B3}. 
\end{proof}

\subsection{The proof of Theorems \ref{thm-twothree1} and \ref{thm-twothree2}}

It follows from Lemma \ref{lem-32B2} that the length $n$ of the code $\C_D$ is given by 
\begin{eqnarray*}
n = |D|=n_0-1=\left\{ \begin{array}{ll}
p^{m-1}-1                               & \mbox{if $m$ odd,} \\
p^{m-1}-1- (-1)^{(\frac{p-1}{2})^2 \frac{m}{2}} (p-1)p^{\frac{m-2}{2}} & \mbox{if $m$ even.} 
\end{array}
\right. 
\end{eqnarray*} 

For each $b \in \gf(q)^*$, define 
\begin{eqnarray}\label{eqn-mcodeword}
\bc_{b}=(\tr(bd_1), \,\tr(bd_2), \,\ldots, \,\tr(bd_n)),   
\end{eqnarray} 
where $d_1, d_2, \ldots, d_n$ are the elements of $D$. 
The Hamming weight $\wt(\bc_b)$ of $\bc_b$ is $n_0-N(b)$, where $n_0$ and $N(b)$ 
were defined before. 

When $m$ is odd, it follows from Lemmas \ref{lem-32B2} and \ref{lem-32B4} that 
\begin{eqnarray*}
\wt(\bc_b)=n_0-N(b)= \left\{ \begin{array}{ll}
(p-1)p^{m-2}                                            & \mbox{if $\tr(b^2)=0$,} \\
(p-1)\left(p^{m-2}+\bar{\eta}(\tr(b^2)) (-1)^{\left(\frac{p-1}{2} \right)^2 (\frac{m+1}{2})} p^{\frac{m-3}{2}} \right) 
                                                                  & \mbox{if $\tr(b^2) \ne 0$.} 
\end{array}
\right. 
\end{eqnarray*}
The desired conclusions of Theorem \ref{thm-twothree1} then follow from Lemma \ref{lem-32B2} and the fact that 
$\wt(\bc_b)>0$ for each $b \in \gf(q)^*$. 

When $m$ is even, it follows from Lemmas \ref{lem-32B2} and \ref{lem-32B4} that 
\begin{eqnarray*}
\wt(\bc_b)=n_0-N(b)= \left\{ \begin{array}{ll}
(p-1)p^{m-2}                                            & \mbox{if $\tr(b^2)=0$,} \\
(p-1)\left(p^{m-2} - (-1)^{\left(\frac{p-1}{2} \right)^2 \frac{m}{2}} p^{\frac{m-2}{2}} \right) & \mbox{if $\tr(b^2) \ne 0$.} 
\end{array}
\right. 
\end{eqnarray*}
The desired conclusions of Theorem \ref{thm-twothree2} then follow from Lemma \ref{lem-32B2} and the fact that 
$\wt(\bc_b)>0$ for each $b \in \gf(q)^*$.

\section{A generalization of the construction} 

Let $f$ be a function from a finite abelian group $(A, +)$ to a finite abelian
group $(B, +)$. 
A robust measure of nonlinearity of $f$ is defined by
\begin{eqnarray*}\label{eqn-nonl02}
P_f=
\max_{0 \neq a \in A} \max_{b \in B} \frac{|\{x \in A : f(x+a)-f(x)=b \}|}{|A|}.   
\end{eqnarray*} 
The smaller the value of $P_f$, the higher the corresponding nonlinearity of $f$. 

It is easily seen that $ P_f \geq \frac{1}{|B|}$ \cite{CD04}. A function $f:
A \to B$ has {\em perfect nonlinearity} if $P_f = \frac{1}{|B|}$. A perfect 
nonlinear function from a finite abelian group to a finite abelian group of 
the same order is called a {\em planar function} in finite geometry. 
Planar functions were introduced by Dembowski and Ostrom in 1968 for the 
construction of affine planes \cite{DO68}. 
We refer to Carlet and Ding \cite{CD04} for a survey of highly nonlinear 
functions, Coulter and Matthews \cite{Coult} and Ding and Yuan \cite{DY04} 
for information about planar functions. 

Some known planar functions from $\gf(q)$ to $\gf(q)$ are the
following \cite{CD04,Coult}:
\begin{itemize}
\item $f(x)=x^2$.

\item $f(x)=x^{p^k+1}$, where $m/\gcd(m,k)$ is odd (Dembowski and Ostrom \cite{DO68}).

\item $f(x)=x^{\frac{3^k+1}{2}}$, where $p=3$, $k$ is odd, and
$\gcd(m,k)=1$ (Coulter and Matthews \cite{Coult}).

\item $f_u(x)= x^{10}-u x^6 -u^2 x^2$, where $p=3$ and $m$ is
odd (Coulter and Matthews \cite{Coult} for the case $u=-1$, Ding and Yuan \cite{DY04} 
for the general case).
\end{itemize} 
Note that planar functions over $\gf(p^m)$ exist for any pair $(p, m)$ with $p$ being 
an odd prime number. 

The construction of the linear code $\C_D$ of this paper can be generalized as follows. Let $f$ be a planar 
function from $\gf(q)$ to $\gf(q)$ such that 
\begin{itemize}
\item $f(0)=0$; 
\item $f(x)=f(-x)$ for all $x \in \gf(q)$; and 
\item $f(ax)=a^hf(x)$ for all $a \in \gf(p)$ and $x \in \gf(q)$, where $h$ is some constant. 
\end{itemize}
Then the set 
$$ 
D_f:=\{x \in \gf(q)^*: \tr(f(x))=0\} \subset \gf(q) 
$$
defines a linear code $\C_{D_f}$ over $\gf(p)$. The code $\C_{D_f}$ may have the same parameters as the code 
$\C_D$ of this paper. Magma confirms that this is true for all the four classes of planar functions listed above. But it 
is open whether $\C_{D_f}$ and $\C_{D}$ have the same parameters and weight distribution for any planar function 
$f$ satisfying the three conditions above. It would 
be nice if this open problem can be settled. 

We remark that this construction of linear codes with planar functions here 
is different from the one in \cite{CDY05}, as the lengths and dimensions of the codes in the two constructions are different.

\section{Applications of the linear codes in secret sharing schemes}

In this section, we describe and analyse the secret sharing schemes from  some of the codes presented in this paper. 

\subsection{Secret sharing schemes}

A secret sharing scheme consists of 
\begin{itemize}
\item a dealer, and a group $\cP=\{P_1, P_2, \cdots, P_\ell \}$ of $\ell$ participants;  
\item a secret space $\cS$; 
\item $\ell$ share spaces $\cS_1$, $\cS_2$, $\cdots$, $\cS_\ell$; 
\item a share computing procedure; and 
\item a secret recovering procedure. 
\end{itemize} 

The dealer will choose a secret $s$ from the secret space $\cS$, and will employ the sharing computing procedure 
to compute a share of the secret $s$ for each participant $P_i$, and then give the share to $P_i$. 
The share computed for $P_i$ belongs to the share space $\cS_i$.   
When a subset of the participants comes together with their shares, they may be able to recover the secret $s$ from their 
shares with the secret recovering procedure. The secret $s$ and the sharing computing function are known only to the dealer, 
while the secret recovering procedure is known to all the participants. 

By an {\em access set} we mean a group of participants who can determine the secret from their shares. 
The {\em access structure} of a secret sharing scheme is defined to be the set of all access sets. 
A {\em minimal access set} is a group of participants who can recover the 
      secret with their shares, but any of its proper subgroups cannot do so. 
A secret sharing scheme is said to have the {\em monotone access structure},
if any superset of any access set is also an access set.  
In a secret sharing scheme with the monotone access structure, the access structure is totally characterized 
by its minimal access sets by definition. In this section, we deal with secret sharing schemes only with the 
monotone access structure.   

Secret sharing schemes have applications in banking systems, cryptographic protocols, electronic voting 
systems, and the control of nuclear weapons. In 1979,  Shamir and Blakley documented the first secret 
sharing schemes in the literature \cite{Blakley,Shamir}.  

\subsection{The covering problem of linear codes} 

In order to describe the secret sharing scheme of a linear code, we need to introduce the covering problem 
of linear codes. 

The {\em support} of a vector $\bc=(c_0, \ldots, c_{n-1}) \in \gf(p)^n$ 
is defined as 
$$
\{0 \leq i \leq n-1: c_i \neq 0\}.
$$
We say that a vector $\bx$ covers a vector $\by$ if the support of $\bx$ contains
that of $\by$ as a proper subset. 

A {\em minimal codeword} of a linear code $\C$ is a nonzero codeword that does not cover 
any other nonzero codeword  of $\C$. The {\em covering problem} of a linear code is to determine all the 
minimal codewords of $\C$. This is a very hard problem in general, but can be solved for certain types of 
linear codes.

\subsection{A construction of secret sharing schemes from linear codes}

Any linear code over $\gf(p)$ can be employed to construct secret sharing schemes \cite{ADHK,CDY05,Mass93,YD06}. 
Given a linear code $\C$ over $\gf(p)$ with parameters $[n, k, d]$ and generator matrix $G=[\bg_0, \bg_1, \ldots, \bg_{n-1}]$, 
we use $d^\perp$  and $H=[\bh_0, \bh_1, \ldots, \bh_{n-1}]$ to denote the minimum distance and the generator matrix of 
its dual code $\C^\perp$. 

In the secret sharing scheme based on $\C$, the secret space and the share spaces all are $\gf(p)$, and the participants are 
denoted by $P_1, P_{2}, \cdots, P_{n-1}$. To compute shares for all the participants, The dealer chooses randomly a 
vector $\bu=(u_0,\ldots, u_{n-k-1})$ such that $s=\bu\bh_0$, which is the inner product of the two vectors. 
The dealer then treats $\bu$ as an information vector and 
computes the corresponding codeword $$\bt=(t_0,t_1,\ldots,t_{n-1})=\bu H.$$
He then gives $t_i$ to party $P_i$ as his/her share for each $i\geq 1$.

The secret recovering procedure is the following. 
 Note that $t_0=\bu\bh_0=s$. 
A set of shares $\{t_{i_1},t_{i_2},\ldots,t_{i_m}\}$ determines the secret $s$ iff $\bh_0$ is a linear combination of
 $\bh_{i_1},\ldots,\bh_{i_m}$. Suppose that 
 $$
 \bh_0= \sum_{j=1}^m x_j \bh_{i_j}.
 $$
 Then the secret $s$ is recovered by computing
 $$
 s=\sum_{j=1}^m x_j t_{i_j}.
 $$
Equivalently, we look for codewords $\bc$ of the code $\C$ with the shape
$$
 (1,0,\ldots,0,c_{i_1},0,\ldots,0,c_{i_m},0,\ldots,0)
$$ 

Hence, the minimal access sets of the secret sharing scheme based on $\C^\perp$
correspond to the minimal codewords in $\C$ having 1 as
their leftmost component. The other nonzero components correspond 
to the participants in the minimal access set. For example, if $(1, 2, 0, 0, 2)$ is a codeword of $\C$,
then  $\{P_1,P_4\}$ is a minimal access set.  
To obtain the access structure of the secret sharing scheme based on $\C^{\perp}$, we need to determine 
all minimal
codewords of $\C$. 

Note that the access structure of the secret sharing scheme based on $\C^\perp$ is independent of 
the choice of the generator matrix $H$ of $\C^\perp$. We therefore  say that the 
secret sharing scheme is based on $\C^\perp$ without mentioning the matrix $H$. We would remind 
the reader that a linear code gives a pair of secret sharing schemes. One is based on $\C$ and the other is 
based on $\C^\perp$. Below we consider only the latter due to symmetry. 

The access structure of the secret sharing scheme based on a linear code is very complex in general, 
but can be determined in certain special cases. 
The following theorem is proved in \cite{DY03,YD04}. 

\begin{theorem}\label{thm-DingYuan}
Let $\C$ be an $[n, k, d]$ code over $\gf(p)$, and let $G=[\bg_0, \bg_1, \cdots, \bg_{n-1}]$
be its generator matrix. Let $d^{\perp}$ denote the minimum distance of its 
dual code $\C^{\perp}$. If each nonzero codeword of $\C$ is minimal,
then in the secret sharing scheme based on $\C^{\perp}$, the total number of participants is $n-1$, 
and there are altogether
$p^{k-1}$ minimal access sets.
\begin{itemize}
\item When $d^{\perp}=2$, the access structure is as follows.

If $\bg_i$ is a multiple of $\bg_0$, $1 \leq i \leq n-1$,
      then participant $P_i$ must be in every minimal access set.

If $\bg_i$ is not a multiple of $\bg_0$, $1 \leq i \leq
      n-1$, then participant $P_i$ must be in $(p-1)p^{k-2}$ out of $p^{k-1}$
      minimal access sets.

\item When $d^{\perp} \geq 3$, for any fixed $1 \leq t \leq
      \min\{k-1, d^{\perp}-2\}$
      every group of $t$ participants is involved in $(p-1)^t p^{k-(t+1)}$
      out of $p^{k-1}$ minimal access sets.
\end{itemize}
\end{theorem}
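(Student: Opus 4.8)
The plan is to translate the entire statement into a counting problem about the codewords of $\C$. As explained just before the theorem, a minimal access set of the scheme based on $\C^{\perp}$ corresponds to a minimal codeword of $\C$ whose leftmost coordinate equals $1$; since by hypothesis every nonzero codeword of $\C$ is minimal, this reduces the whole theorem to counting codewords $\bc=(c_0,c_1,\ldots,c_{n-1})\in\C$ with $c_0=1$ and prescribed support behaviour. First I would record the basic count: writing $\bc=\bv G$ for $\bv\in\gf(p)^k$, the map $\bv\mapsto c_0=\bv\bg_0$ is $\gf(p)$-linear, and because $\bg_0\neq\bzero$ (automatic here, since $d^{\perp}\ge 2$ forbids zero columns of $G$: a zero column at position $i$ would put the weight-one word $e_i$ in $\C^{\perp}$) it is surjective onto $\gf(p)$. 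Hence exactly $p^{k-1}$ codewords satisfy $c_0=1$.

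Next I would check that $\bc\mapsto\{i\ge 1:c_i\neq 0\}$ is a bijection from these words onto the minimal access sets. If two such words $\bc\neq\bc'$ shared the same support, then $\bc-\bc'$ would be a nonzero codeword whose support is a proper subset of that of $\bc$, contradicting the minimality of $\bc$; conversely each minimal access set is realized by such a word. This settles that there are $n-1$ participants and $p^{k-1}$ minimal access sets, and reduces the access-structure statements to counting, for a fixed coordinate set $\{i_1,\ldots,i_t\}\subseteq\{1,\ldots,n-1\}$, the codewords with $c_0=1$ and $c_{i_1},\ldots,c_{i_t}$ all nonzero.

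The key tool is the standard dictionary that $d^{\perp}$ equals the minimum number of linearly dependent columns of $G$ (viewing $G$ as a parity-check matrix of $\C^{\perp}$); equivalently, any $d^{\perp}-1$ columns of $G$ are linearly independent. For $d^{\perp}\ge 3$ and $1\le t\le\min\{k-1,d^{\perp}-2\}$, the $t+1$ columns $\bg_0,\bg_{i_1},\ldots,\bg_{i_t}$ number at most $d^{\perp}-1$ and at most $k$, so they are linearly independent. Consequently the evaluation map $\bv\mapsto(\bv\bg_0,\bv\bg_{i_1},\ldots,\bv\bg_{i_t})$ from $\gf(p)^k$ onto $\gf(p)^{t+1}$ is surjective, so every target vector is hit by exactly $p^{k-(t+1)}$ codewords. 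Imposing $c_0=1$ and each $c_{i_j}\in\gf(p)^*$ leaves $(p-1)^t$ admissible targets, giving $(p-1)^t p^{k-(t+1)}$ codewords, hence that many minimal access sets containing $\{P_{i_1},\ldots,P_{i_t}\}$, as claimed.

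Finally, for $d^{\perp}=2$ I would treat a single participant $P_i$ by the same surjectivity idea applied to the pair $\bg_0,\bg_i$, splitting into two cases. If $\bg_i$ is not a multiple of $\bg_0$, the two columns are independent and the $t=1$ computation above gives $(p-1)p^{k-2}$ minimal access sets containing $P_i$. If instead $\bg_i=\lambda\bg_0$, then $\lambda\neq 0$ (no zero columns when $d^{\perp}=2$), so every word with $c_0=1$ has $c_i=\lambda\neq 0$, placing $P_i$ in all $p^{k-1}$ minimal access sets. I expect the only genuinely delicate points to be the two reductions above---verifying that ``minimal codeword with leading $1$'' carries exactly the same data as ``minimal access set'' (the injectivity argument via minimality), and invoking the dual-distance/column-independence dictionary correctly---after which every count is a routine application of surjectivity of a linear evaluation map.
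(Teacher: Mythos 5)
Your proof is correct. The paper itself does not prove this theorem---it only cites \cite{DY03,YD04}---and your argument (first the bijection between minimal access sets and the $p^{k-1}$ codewords of $\C$ with leading coordinate $1$, justified by the hypothesis that all nonzero codewords are minimal, and then the fibre count for the evaluation map $\bv\mapsto(\bv\bg_0,\bv\bg_{i_1},\ldots,\bv\bg_{i_t})$ using the fact that any $d^{\perp}-1$ columns of $G$ are linearly independent) is exactly the standard argument of the cited references, with the relevant edge cases ($\bg_0\neq\bzero$, the dependent-column case when $d^{\perp}=2$) handled properly.
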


When the conditions of Theorem \ref{thm-DingYuan} are satisfied, the secret sharing scheme based 
on the dual code $\C^\perp$ is interesting. In the case that $d^{\perp}=2$, some participants must 
be in every minimal access sets, and thus are dictators. Such a secret sharing scheme may be required 
in certain applications. In the case that $d^{\perp} \geq 3$, each participant plays the same role as 
he/she is involved in the same number of minimal access sets. Such a secret sharing scheme is said 
to be {\em democratic}, and may be needed in some other application scenarios.   

A question now is how to construct a linear code whose nonzero codewords all are minimal. The following 
lemma provides a guideline in this direction \cite{AsBa95,AsBa98}. 

\begin{lemma}\label{lem-Ash} 
Every nonzero codeword of a linear code $\C$ over $\gf(p)$ is minimal, provided that 
$$
\frac{w_{min}}{w_{max}}>\frac{p-1}{p},
$$
where $w_{max}$ and $w_{min}$ denote the maximum and minimum nonzero weights in $\C$, 
respectively. 
\end{lemma}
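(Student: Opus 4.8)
The plan is to argue by contradiction using the standard scalar-shift technique. Suppose, contrary to the claim, that some nonzero codeword $\bc_1 \in \C$ covers another nonzero codeword $\bc_2 \in \C$, so that the support of $\bc_2$ is a proper subset of the support of $\bc_1$. Writing $w_1=\wt(\bc_1)$ and $w_2=\wt(\bc_2)$, this forces $0 < w_2 < w_1$. The central idea is to consider the whole family of $p$ codewords $\bc_1-a\bc_2$ as $a$ ranges over $\gf(p)$, and to compute the total of their Hamming weights by examining each coordinate separately.

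For each coordinate position $i$ I would split into three cases according to the two supports. If $i$ lies outside the support of $\bc_1$, then (since the support of $\bc_2$ is contained in that of $\bc_1$) both $\bc_1$ and $\bc_2$ vanish at $i$, so this coordinate contributes nothing to any $\bc_1-a\bc_2$. If $i$ lies in the support of $\bc_1$ but not of $\bc_2$, then the $i$-th entry of $\bc_1-a\bc_2$ equals the nonzero entry of $\bc_1$ for every $a$, contributing $1$ to the weight for each of the $p$ choices of $a$. Finally, if $i$ lies in the support of $\bc_2$, then both entries are nonzero and there is exactly one scalar $a$ (the ratio of the two entries) that annihilates this coordinate, so it contributes for exactly $p-1$ of the $a$'s. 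Summing over the $w_1-w_2$ coordinates of the second type and the $w_2$ coordinates of the third type gives
$$\sum_{a\in\gf(p)}\wt(\bc_1-a\bc_2)=p(w_1-w_2)+(p-1)w_2=pw_1-w_2.$$

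To finish, I would isolate the terms with $a\neq 0$. Because the containment of supports is proper, each $\bc_1-a\bc_2$ with $a\in\gf(p)^*$ has a nonzero entry on the support difference, hence is a nonzero codeword of $\C$ of weight at least $w_{min}$; separating off the $a=0$ term (which equals $w_1$) yields
$$(p-1)w_1-w_2=\sum_{a\in\gf(p)^*}\wt(\bc_1-a\bc_2)\ge (p-1)\,w_{min}.$$
Substituting the crude bounds $w_1\le w_{max}$ and $w_2\ge w_{min}$ into the left-hand side gives $(p-1)w_{max}-w_{min}\ge (p-1)w_{min}$, which rearranges to $p\,w_{min}\le (p-1)w_{max}$, i.e.\ $w_{min}/w_{max}\le (p-1)/p$. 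This contradicts the hypothesis, so no such covering pair exists and every nonzero codeword is minimal.

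The only delicate point is the coordinate bookkeeping in the middle step, in particular recognizing that a position in the support of $\bc_2$ is killed by exactly one scalar while a position in the support difference is never killed. Once the weight sum $pw_1-w_2$ is established, the contradiction is a one-line manipulation, so I expect the counting to be the main (and essentially the only) obstacle.
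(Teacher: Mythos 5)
Your proof is correct. The paper does not actually prove this lemma --- it is quoted from Ashikhmin--Barg \cite{AsBa95,AsBa98} --- and your argument is the standard one from those references: the coordinate-by-coordinate count giving $\sum_{a\in\gf(p)}\wt(\bc_1-a\bc_2)=pw_1-w_2$ is right (a position in the support difference is never killed, a position in $\mathrm{supp}(\bc_2)$ is killed by exactly one, necessarily nonzero, scalar), the properness of the containment correctly guarantees that every $\bc_1-a\bc_2$ is a nonzero codeword, and the final rearrangement to $p\,w_{min}\le(p-1)w_{max}$ is a valid contradiction with the hypothesis.
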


\subsection{The secret sharing schemes from the codes of this paper}

In this subsection, we consider the secret sharing schemes based on the dual codes $\C_D^\perp$ and 
$\C_{\bar{D}}^\perp$ of the codes $\C_D$ and $\C_{\bar{D}}$ presented in this paper. 

For the code $\C_D$ of Theorem \ref{thm-twothree1} and  the code $\C_{\bar{D}}$ of Corollary  
\ref{thm-twothree3}, we have 
\begin{eqnarray*}
\frac{w_{\min}}{w_{\max}} = \frac{p^{m-2}-p^{\frac{m-3}{2}}}{p^{m-2}+p^{\frac{m-3}{2}}} > \frac{p-1}{p}
\end{eqnarray*}
if $m \geq 5$. 

Let $m \equiv 0 \pmod{4}$ or $m \equiv 0 \pmod{2}$ and $p \equiv 1 \pmod{4}$. 
Then for the code $\C_D$ of Theorem \ref{thm-twothree2} and  the code $\C_{\bar{D}}$ of Corollary  
\ref{thm-twothree4}, we have 
\begin{eqnarray*}
\frac{w_{\min}}{w_{\max}} = \frac{p^{m-2}-p^{\frac{m-2}{2}}}{p^{m-2}} > \frac{p-1}{p}
\end{eqnarray*}
if $m \geq 4$. 

Let $m \equiv 2 \pmod{4}$ and $p \equiv 1 \pmod{4}$. 
Then for the code $\C_D$ of Theorem \ref{thm-twothree2} and  the code $\C_{\bar{D}}$ of Corollary  
\ref{thm-twothree4}, we have 
\begin{eqnarray*}
\frac{w_{\min}}{w_{\max}} = \frac{p^{m-2}}{p^{m-2}+p^{\frac{m-2}{2}}} > \frac{p-1}{p}
\end{eqnarray*}
if $m \geq 6$. 

It then follows from Lemma \ref{lem-Ash} that all the nonzero codewords of  $\C_{D}$ 
and $\C_{\bar{D}}$ are minimal if $m \ge 6$. Hence, the secret sharing schemes based on 
the dual codes $\C_D^\perp$ and $\C_{\bar{D}}^\perp$ have the nice access structures described 
in Theorem \ref{thm-DingYuan}. 

As an example, we describe the access structure of the secret sharing scheme based on the dual code 
$\C_{\bar{D}}^\perp$ of the code  $\C_{\bar{D}}$ of Corollary \ref{thm-twothree3} as follows. 

\begin{corollary}\label{cor-secrets} 
Let $m \geq 5$. In the secret sharing scheme based on the dual code 
$\C_{\bar{D}}^\perp$ of the code  $\C_{\bar{D}}$ of Corollary \ref{thm-twothree3}, the 
total number of participants is $p^{m-2}$, and the total number of minimal access sets is 
$p^{m-1}$. Every participant is a member of exactly $(p-1)p^{m-2}$ minimal access sets.   
\end{corollary}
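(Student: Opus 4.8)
The plan is to apply Theorem \ref{thm-DingYuan} to the code $\C_{\bar D}$ of Corollary \ref{thm-twothree3}, whose weight distribution is already known, so the entire argument reduces to verifying the hypotheses of that theorem and then substituting the right parameters. First I would check that every nonzero codeword of $\C_{\bar D}$ is minimal. By Lemma \ref{lem-Ash} it suffices to verify $w_{\min}/w_{\max}>(p-1)/p$, and from Table \ref{tab-twothree3} the extreme nonzero weights are $w_{\min}=p^{m-2}-p^{(m-3)/2}$ and $w_{\max}=p^{m-2}+p^{(m-3)/2}$. The inequality $\frac{p^{m-2}-p^{(m-3)/2}}{p^{m-2}+p^{(m-3)/2}}>\frac{p-1}{p}$ is exactly the one recorded earlier in this subsection, and a short manipulation shows it holds once $m\ge 5$. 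Hence for $m\ge 5$ all nonzero codewords of $\C_{\bar D}$ are minimal, and Theorem \ref{thm-DingYuan} applies to the dual code $\C_{\bar D}^\perp$.

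Next I would read off the numerical conclusions. The length of $\C_{\bar D}$ is $n=(p^{m-1}-1)/(p-1)$ by Corollary \ref{thm-twothree3}, and its dimension is $k=m$. Theorem \ref{thm-DingYuan} says the number of participants is $n-1$ and the number of minimal access sets is $p^{k-1}$. Substituting gives $n-1=\frac{p^{m-1}-1}{p-1}-1=\frac{p^{m-1}-p}{p-1}=p\cdot\frac{p^{m-2}-1}{p-1}$; I should double-check this simplifies to the claimed $p^{m-2}$, which it does not literally equal, so I expect the intended count of participants is $n-1$ expressed in whichever closed form the statement uses, and I would reconcile the arithmetic carefully at this step. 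Likewise $p^{k-1}=p^{m-1}$ gives the stated number of minimal access sets directly.

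For the per-participant count I would determine $d^\perp$, the minimum distance of $\C_{\bar D}^\perp=\C_{\bar D}$'s dual. Since $\C_{\bar D}$ has full-length generator matrix whose columns are the distinct nonzero representatives $b\in\bar D$ with no two columns scalar multiples of one another (by the very definition of $\bar D$ in (\ref{eqn-dfsetbarD})), no codeword of the dual has weight $1$ or $2$, so $d^\perp\ge 3$. Applying the $d^\perp\ge 3$ branch of Theorem \ref{thm-DingYuan} with $t=1$ shows every single participant lies in $(p-1)p^{k-2}=(p-1)p^{m-2}$ of the minimal access sets, which is exactly the claimed figure.

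The main obstacle I anticipate is twofold: verifying that $d^\perp\ge 3$ rigorously (i.e.\ that the defining set $\bar D$ genuinely contains no repeated column up to scalar and is not contained in a hyperplane in a way that would force small dual weight), and reconciling the exact arithmetic for the participant count, since $n-1$ and $p^{m-2}$ must be shown to agree under the stated hypotheses. Everything else is a direct invocation of Theorem \ref{thm-DingYuan} once minimality is established via Lemma \ref{lem-Ash}.
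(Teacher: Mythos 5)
Your approach is exactly the paper's: establish that every nonzero codeword of $\C_{\bar{D}}$ is minimal via Lemma \ref{lem-Ash} using the ratio $w_{\min}/w_{\max}$ from Table \ref{tab-twothree3}, observe $d^\perp \geq 3$, and read the conclusions off Theorem \ref{thm-DingYuan} with $t=1$; your justification of $d^\perp\geq 3$ (no zero column, and no two columns of the generator matrix are $\gf(p)$-scalar multiples of one another, by the defining property of $\bar{D}$ in (\ref{eqn-dfsetbarD})) is precisely the content behind the paper's unexplained ``it can be easily proved.'' The arithmetic discrepancy you flag in the participant count is real and cannot be reconciled: Theorem \ref{thm-DingYuan} gives $n-1=\frac{p^{m-1}-1}{p-1}-1=\frac{p^{m-1}-p}{p-1}$ participants, which equals $p^{m-2}$ only when $m=3$, so for $m\geq 5$ the figure $p^{m-2}$ in the corollary (and the $125$ in Example \ref{exam-secrets}, which should be $155$) is an error in the statement rather than in your argument. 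The other two counts, $p^{k-1}=p^{m-1}$ minimal access sets and $(p-1)p^{k-2}=(p-1)p^{m-2}$ sets per participant, are correct as you derive them.
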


\begin{proof} 
As proved above, every nonzero codeword of $\C_{\bar{D}}$ is minimal as $m \geq 5$. 
It can be easily proved that $d^\perp \geq 3$. The desired conclusions then follow from Theorem 
\ref{thm-DingYuan}. 
\end{proof}

As an example of Corollary \ref{cor-secrets}, we have the following. 
\begin{example}\label{exam-secrets} 
Let $m=5$ and $p=5$. In the secret sharing scheme based on the dual code 
$\C_{\bar{D}}^\perp$ of the code  $\C_{\bar{D}}$ of Corollary \ref{thm-twothree3}, the 
total number of participants is $125$, and the total number of minimal access sets is 
$625$. Every participant is a member of exactly $500$ minimal access sets.    
\end{example}

In the secret sharing scheme of Example \ref{exam-secrets}, the secret space is $\gf(5)$, 
which is too small. However, it can still be employed for sharing a secret of any size. This is 
done as follows. One can have $\gf(5^h)$ as the extended secret space, where $h$ could be as large 
as one wants (e.g., $h=60$). Then any secret can be encoded as a sequence 
$$ 
s=s_1s_2 \ldots s_h   
$$
using an encoding scheme, where each $s_i \in \gf(5)$. 
Then the secret $s$ can be shared by the 125 participants symbol by symbol  
with the secret sharing scheme of Example \ref{exam-secrets}. Hence, the share for each 
participant will be a sequence of elements of $\gf(5)$ with length $h$. When a group of 
participants come together with their shares, the elements $s_i$ in the secret $s$ will be 
recovered one by one using the corresponding elements in their shares. 

Finally, we mention that the secret sharing scheme based on the dual code 
$\C_{\bar{D}}^\perp$ of the code  $\C_{\bar{D}}$ of Corollary \ref{thm-twothree4} 
has a similar access structure as the one described in Corollary \ref{cor-secrets}. 
For the linear codes of Theorems \ref{thm-twothree1} and \ref{thm-twothree2}, 
their dual codes have minimum distance $2$. Hence, the secret sharing scheme based 
on the dual code $\C_{\bar{D}}^\perp$ of the code  $\C_{\bar{D}}$ in 
Theorems \ref{thm-twothree1} and \ref{thm-twothree2} have dictators in 
the whole group of participants. Their access structure is given in the first 
case of Theorem \ref{thm-DingYuan}.

\section{Concluding remarks} 

Calderbank and Kantor surveyed two-weight codes in \cite{CK85}. There is a recent survey on three-weight 
cyclic codes \cite{DLLZ}.  Some interesting two-weight and three-weight codes were presented in \cite{CG84}, \cite{CW84}, 
\cite{Choi}, \cite{FL07}, \cite{LiYueLi}, \cite{LiYueLi2}, \cite{RP10}, \cite{Xia}, and \cite{ZD13}. 
The length of the two-weight and three-weight codes in the literature usually divides $p^m-1$, while 
that of the codes presented in this paper does not have this property.  
We did not find the parameters of the two-weight and three-weight codes of this paper in the literature.  

The two-weight codes $\C_D$ of this paper give automatically strongly regular graphs having new parameters 
with the connection described in \cite{CK85}, 
and the three-weight codes $\C_D$ of this paper may yield association schemes having new parameters 
with the framework introduced in \cite{CG84}. 
The linear codes of this paper can be employed to construct authentication codes having new parameters 
via the framework in \cite{DHKW,CX05}. For this application, we need to know not only 
the weight distribution of the linear codes, but also the distribution of each element of $\gf(p)$ in 
each codeword of the linear code. This is called the {\em complete weight distribution} of a code. 
Another advantage of the linear codes in this paper is that their complete weight distribution can be 
settled with the help of Gaussian sums. In the literature the complete weight distribution of only a 
few classes of linear codes is known.  

Compared with other two-weight and three-weight codes, the construction method of the codes in this paper 
is very simple and is defined by the simple function $\tr(x^2)$. This makes the analysis of the linear codes much 
easier.

\section*{Acknowledgements} 
The authors are very grateful to the reviewers and the Associate Editor, Dr. Yongyi Mao, for their comments and suggestions
that improved the presentation and quality of this paper.

\end{document}